\documentclass[]{article}
\usepackage{amsfonts}
\usepackage{amssymb}
\usepackage[letterpaper, left=2.5cm, right=2.5cm, top=2.5cm,
bottom=2.5cm,dvips]{geometry}
\usepackage{verbatim}

\usepackage[T1]{fontenc}
\usepackage{graphicx}
\usepackage{diagbox}
\usepackage{xcolor}
\usepackage{amsmath}
\usepackage{array,multirow}
\usepackage{dsfont}
\usepackage{amsthm}
\usepackage{thmtools}
\usepackage{thm-restate}
\usepackage{enumitem}
\usepackage{bm}

\usepackage{physics}
\usepackage{mathrsfs}
\usepackage{extarrows}
\usepackage{mathdots}
\usepackage{mathtools}

\usepackage[pagebackref]{hyperref}
\hypersetup{
	colorlinks,
	linkcolor={blue!60!black},
	citecolor={green!60!black},
	urlcolor={green!60!black}
}

\definecolor{cream}{RGB}{203, 237, 204}

\newtheorem{theorem}{Theorem}[section]
\newtheorem{lemma}[theorem]{Lemma}
\newtheorem{proposition}[theorem]{Proposition}

\newtheorem{claim}[theorem]{Claim}

\numberwithin{equation}{section}

\makeatletter
\@namedef{theHequation}{\thesection.\arabic{equation}}
\newcommand{\settheoremHcounter}[1]{\@namedef{theH#1}{\thesection.\arabic{theorem}}}
\settheoremHcounter{theorem}
\settheoremHcounter{lemma}
\settheoremHcounter{proposition}
\settheoremHcounter{corollary}
\settheoremHcounter{conjecture}
\settheoremHcounter{definition}
\settheoremHcounter{construction}
\settheoremHcounter{remark}
\settheoremHcounter{claim}
\settheoremHcounter{example}
\settheoremHcounter{question}
\settheoremHcounter{problem}
\settheoremHcounter{fact}
\settheoremHcounter{observation}
\settheoremHcounter{thm}
\settheoremHcounter{prop}
\settheoremHcounter{rmk}
\settheoremHcounter{cor}
\settheoremHcounter{eg}
\settheoremHcounter{nota}
\makeatother

\newtheorem{thm}[theorem]{Theorem}

\DeclareMathOperator{\supp}{supp}
\DeclareMathOperator{\wt}{wt}

\newcommand{\R}{\mathbb{R}}

\newcommand{\C}{\mathbb{C}}
\newcommand{\cq}{\C^q}
\newcommand{\cqn}{(\cq)^{\otimes n}}
\newcommand{\kp}{\ket{\psi}}
\newcommand{\rk}{\mathrm{rank}\,}
\newcommand{\yp}{\mathbf{y}_{\mathrm p}}
\newcommand{\typ}{\tilde{\mathbf{y}}_{\mathrm p}}
\newcommand{\trans}{^{\mathsf T}}

\title{An Explicit Scott-Type Bound for Absolutely Maximally Entangled States with Arbitrary Defect}
\author{
Shixuan Zeng\thanks{School of the Gifted Young, University of Science and Technology of China.
\texttt{zsx0515@mail.ustc.edu.cn}}
\and
Xiande Zhang\thanks{School of Mathematical Sciences, University of Science and Technology of China;
Hefei National Laboratory.
\texttt{drzhangx@ustc.edu.cn}}
}
\date{}

\begin{document}

\maketitle

\begin{abstract}
Absolutely maximally entangled (AME) states and, more generally, $k$-uniform states in
$(\C^q)^{\otimes n}$ are central objects in multipartite entanglement theory, with applications
to quantum secret sharing, quantum masking, and quantum error correction. In the extremal case $k=\lfloor n/2\rfloor$, 
Scott (2004) proved a sharp nonexistence bound showing that AME states cannot exist once the number of parties $n$
exceeds a threshold of order $2q^{2}$ (with a parity dependence on $n$), where $q$ is the local dimension. Recently,
Ning et al.\  studied \emph{defective} AME states (i.e., $k=\lfloor n/2\rfloor-l$ with $l>0$), gave explicit
Scott-type bounds for defects $l=1,2$ and conjectured a general $(2l+2)q^{2}+o(q^{2})$ behavior. In this paper,  we solve this conjecture and
establish a fully explicit Scott-type upper bound for AME states with arbitrary defect $l\ge 0$, yielding Scott's bound for $l=0$ and Ning et al.'s bounds for $l=1,2$ as special cases.
Equivalently, this gives nonexistence bounds for one-dimensional pure quantum error-correcting codes near the quantum Singleton regime. 
The proof uses a truncated MacWilliams linear-programming system and an explicit infeasibility certificate.
As a direct application, we derive explicit asymptotic upper bounds on $k/n$ for fixed local dimension $q$, improving the implicit upper bounds given by Ning et al.

\end{abstract}

\section{Introduction}
Multipartite entanglement provides a structural resource in quantum information theory. In this work, we study \emph{$k$-uniform states} in $(\C^q)^{\otimes n}$, namely pure states whose $k$-body reduced density matrices are maximally mixed~\cite{GoyenecheZyczkowski2014,Scott2004}. These states also admit a coding-theoretic interpretation: a $k$-uniform state is equivalently a one-dimensional pure quantum error-correcting code with distance $k+1$~\cite{Scott2004,CalderbankRainsShorSloane1998}. Thus, the existence or nonexistence of $k$-uniform states can be viewed as the existence or nonexistence of pure quantum codes of dimension one. Uniform states also appear in quantum information tasks such as secret sharing~\cite{HelwigCuiRieraLatorreLo2012} and masking~\cite{ShiLiChenZhang2021Masking}.

The existence problem for $k$-uniform states with prescribed parameters is central to this topic \cite{FengJinXingYuan2017,GoyenecheAlsinaLatorreRieraZyczkowski2015,GoyenecheRaissiDiMartinoZyczkowski2018,GoyenecheZyczkowski2014,HuberEltschkaSiewertGuhne2018,HuberGuhneSiewert2017,LiWang2019,Rains1999ShadowEnum,RaissiTeixidoGogolinAcin2020,ShiNingZhaoZhang2025}. A particularly important case is the extremal regime in which $k$ reaches half of the number of parties; such states are called absolutely maximally entangled (AME) states.
Orthogonal arrays and classical error-correcting codes give systematic families of $k$-uniform states~\cite{FengJinXingYuan2017,GoyenecheZyczkowski2014,LiWang2019}. AME states have also been connected with combinatorial designs and multiunitary matrices~\cite{GoyenecheAlsinaLatorreRieraZyczkowski2015}, while quantum combinatorial designs provide further constructions of $k$-uniform states~\cite{GoyenecheRaissiDiMartinoZyczkowski2018}. More recent constructions go beyond the maximum-distance-separable-code framework and produce additional examples of $k$-uniform and AME states~\cite{RaissiTeixidoGogolinAcin2020}. On the other hand, nonexistence results have been developed using quantum weight enumerators, shadow enumerators, and related linear-programming constraints~\cite{HuberEltschkaSiewertGuhne2018,Rains1999ShadowEnum,ShiNingZhaoZhang2025}; for instance, the seven-qubit AME state was ruled out in~\cite{HuberGuhneSiewert2017}. In this line, deriving necessary conditions, or equivalently upper bounds on $k$ in terms of the number of systems $n$, is particularly important~\cite{HuberEltschkaSiewertGuhne2018,ShiNingZhaoZhang2025}. 

A basic necessary condition for the existence of $k$-uniform states in $(\C^q)^{\otimes n}$ is $k\le \lfloor n/2\rfloor,$
which follows either from the quantum Singleton bound~\cite{AshikhminLitsyn1999,GrasslHuberWinter2022} or directly from the Schmidt-rank argument.
The extremal case $k=\lfloor n/2\rfloor$ corresponds to AME states, which represent the strongest form of multipartite entanglement among all pure states. However,  AME states do not exist in general. For example, AME states in $(\C^2)^{\otimes n}$ exist if and only if $n =
2,3,5,6$ \cite{Rains1998Shadow,Scott2004,HuberGuhneSiewert2017,HuberEltschkaSiewertGuhne2018}. For larger dimension
$q$, many cases remain open, and we refer to the online table of AME states
maintained by Huber and Wyderka \cite{Huber:AMEtable}. 
Scott~\cite{Scott2004} proved the nonexistence of AME states in the following celebrated parity-dependent bound~\cite{Scott2004} in 2004, which rules out all AME states for large $n$ compared to the dimension $q$.

\begin{thm}[Scott's bound~\cite{Scott2004}]\label{thm:scott}
AME states in $(\C^q)^{\otimes n}$ do not exist if
\[
n>
\begin{cases}
2q^{2}-2, & n \ \text{even},\\
2q^{2}+2q-1, & n \ \text{odd}.
\end{cases}
\]
\end{thm}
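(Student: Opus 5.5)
Following Scott's original argument, the plan is to use the quantum weight enumerators of the AME state $\ket{\psi}$ together with the quantum MacWilliams identity. Let $A_j$ ($0\le j\le n$) be the Shor--Laflamme enumerator of $\ket{\psi}$. For a pure state we have $A_j=B_j$, $A_0=1$, and $A_j\ge 0$. The $k$-uniformity with $k=\lfloor n/2\rfloor$ forces $A_1=\dots=A_k=0$. The MacWilliams transform gives
\[
A_j=\frac{1}{q^{n}}\sum_{i=0}^{n} K_j(i;n,q^2)\,A_i ,
\]
where $K_j$ are the Krawtchouk polynomials over an alphabet of size $q^2$. Equivalently, the dual-side condition says that $\sum_i A_i\,\binom{n-i}{n-m}$ is determined for every $m\le k$. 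This holds because the reduced states on any $m$ parties are maximally mixed, which yields
\[
\sum_{i=0}^{m}\binom{n-i}{n-m}A_i=\binom{n}{m}q^{2m-n}\quad\text{for } m\le k,
\]
read together with the complementary identity obtained from purity. Only $A_{k+1},\dots,A_n$ remain unknown, and the identities for $m=0,\dots,k$ (plus the Schmidt symmetry $m\leftrightarrow n-m$) determine them. Since the number of unknowns is about $n-k\approx k$, the system essentially pins down the enumerator uniquely.

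The key step is to solve explicitly for one extremal coefficient and show that it becomes negative for large $n$. Following Scott, I would use the shadow enumerator inequality $S_j\ge0$ rather than the $A_j$ alone, or equivalently compute the unique solution for $A_{k+1}$ or $A_{k+2}$ in closed form. A sum like $\sum_i(-1)^i\binom{\cdot}{\cdot}q^{2i}$ should telescope to something like
\[
A_{k+2}\propto \binom{n}{k+2}\Bigl[(q^2-1)\ldots - \tfrac{n-\dots}{2}\Bigr].
\]
The cleanest route: use the identity for $m=n-k-1$ and $m=n-k-2$ with $A_1=\dots=A_k=0$. Then the identity for $m=k+1$ (when $n$ is even, $n-k-1=k-1$) gives $A_{k+1}$ directly, and the next one gives $A_{k+2}$ as a linear expression in $n$ and $q^2$. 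For $n$ even, $A_{k+2}\ge0$ should reduce to $n\le 2q^2-2$. For $n$ odd, the parity shift changes which binomials appear, giving the threshold $2q^2+2q-1$; here I expect to need $A_{k+1}$ and $A_{k+2}$ together, or the shadow coefficient $S_0$, to capture the $2q$ term.

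The main obstacle is the odd case: computing the explicit sums exactly so that the threshold $2q^2+2q-1$ emerges sharply. Carrying out the binomial alternating sums requires care with the Vandermonde-type identities. I would first handle the even case completely, then redo the computation with $k=(n-1)/2$, tracking the extra $(q^2-1)$ versus $q$ factors that arise from the $q^{2m-n}$ normalization at odd $n$.
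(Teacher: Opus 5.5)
Your route is the right one, and it is what the paper's argument reduces to at $l=0$. The paper just cites Scott there, but at $l=0$ its certificate $\lambda_{l+1}=l$, $\lambda_{l+2}=\frac{l+1}{n-(k+l+1)}$ collapses to the single inequality $A_{k+2}\ge 0$. For an AME state every $A'_i$ is known through $A'_i=A'_{n-i}$, so the relations $q^iA'_i=\sum_{j\le i}A_j\binom{n-j}{i-j}$ at $i=k+1,k+2$ determine $A_{k+1},A_{k+2}$, and $A_{k+2}\ge0$ gives the threshold.

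There are, however, two concrete problems. First, your displayed identity is stated on the wrong range. Its left side is $q^mA'_m$. For $m\le k$ it equals $\binom nm$, because $A_0=1$ and $A_1=\cdots=A_m=0$. So the relations with $m\le k$ carry no information, and with your right-hand side $q^{2m-n}\binom nm$ they would be contradictory for every $m<n/2$ (already at $m=0$: $1=q^{-n}$). That is, as written they would ``rule out'' AME states for all $n$. The correct statement is the same identity for $m\ge n-k$, coming from $A'_m=A'_{n-m}=q^{-(n-m)}\binom{n}{n-m}$. Since $n-k\le k+1$ for both parities, the levels $m=k+1,k+2$ lie in this range, and they are all you need.

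Second, you leave the odd case open and suggest the shadow enumerator may be needed. It is not: the same two equations already produce the $2q$ term. For $n=2k$ one gets $A_{k+1}=(q^2-1)\binom n{k+1}$ and
\[
A_{k+2}=(q^4-1)\binom n{k+2}-(k-1)(q^2-1)\binom n{k+1}=(q^2-1)(k-1)\binom n{k+1}\Bigl(\frac{q^2+1}{k+2}-1\Bigr).
\]
Hence $A_{k+2}<0$ exactly when $k>q^2-1$, i.e.\ $n>2q^2-2$.

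For $n=2k+1$ the symmetry gives $q^{k+1}A'_{k+1}=q\binom n{k+1}$ and $q^{k+2}A'_{k+2}=q^3\binom n{k+2}$. Therefore $A_{k+1}=(q-1)\binom n{k+1}$ and
\[
A_{k+2}=(q^3-1)\binom n{k+2}-k(q-1)\binom n{k+1}=(q-1)k\binom n{k+1}\Bigl(\frac{q^2+q+1}{k+2}-1\Bigr).
\]
Hence $A_{k+2}<0$ exactly when $k>q^2+q-1$, i.e.\ $n>2q^2+2q-1$. The extra $q$ comes from $q^3-1=(q-1)(q^2+q+1)$, since odd parity produces the factors $q,q^3$ instead of $q^2,q^4$. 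The prefactors $k-1$ and $k$ are positive in the relevant range.

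These are exactly the paper's expressions for $a_{k+l+1},a_{k+l+2}$ and its final inequalities $\frac{l+1}{k+l+2}(q^2+1)\ge 1$, resp.\ $\frac{l+1}{k+l+2}(q^2+q+1)\ge 1$, specialized to $l=0$. With the range corrected and these two computations carried out, your plan becomes a complete proof. As submitted, the wrong identity and the unresolved odd case mean it is not yet one.
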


Beyond the AME regime, it is natural to study states that are close to AME states, that is, when $k$ is close to
$\lfloor n/2\rfloor$. Following the notation in~\cite{NingShiLuoZhang2025}, for an integer $l\ge 0$, we call an $(\lfloor n/2\rfloor-l)$-uniform
state an \emph{AME state of defect $l$}. 
The defect parameter quantifies the distance to the AME threshold
and provides a convenient scale to formulate refined Scott-type bounds.
For qubit systems, Rains' shadow bounds \cite{Rains1998Shadow,Rains1999ShadowEnum} and their improvements \cite{Han2006Nonexistence,Han2009Nonexistence} showed the nonexistence of AME states of defect approximately $n/6$. For qutrit systems, Shi \emph{et al.} \cite{ShiNingZhaoZhang2025} extended the shadow bounds and showed the nonexistence of AME states of defect approximately $n/14$.
There are no such systematic results for general dimension $q$.
Recently,  Ning et al. \cite{NingShiLuoZhang2025} obtained explicit
Scott-type bounds for defects $l=1,2$ for general $q$ and proposed a general asymptotic picture for fixed $l$.

\begin{thm}[Ning et al.\ \cite{NingShiLuoZhang2025}]\label{thm:ning}
Let $q\ge 2$.
\begin{enumerate}
\item[(i)] AME states of defect $1$ in $(\C^q)^{\otimes n}$ do not exist if
\[
n>
\begin{cases}
4q^{2}, & n \ \text{even},\\
4q^{2}+4q+1, & n \ \text{odd}.
\end{cases}
\]
\item[(ii)] AME states of defect $2$ in $(\C^q)^{\otimes n}$ do not exist if
\[
n>
\begin{cases}
6q^{2}+2, & n \ \text{even},\\
6q^{2}+6q+3, & n \ \text{odd}.
\end{cases}
\]
\end{enumerate}
\end{thm}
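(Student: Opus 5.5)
This is Ning et al.'s result, and I would prove it with the same linear-programming method Scott used for Theorem~\ref{thm:scott}.

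\textbf{Setup.} Let $\kp$ be a $k$-uniform state in $\cqn$ with $k=\lfloor n/2\rfloor-l$ and $l\in\{1,2\}$. Consider its Shor--Laflamme weight enumerators $A_j=B_j$ for $j=0,\dots,n$. Purity forces $A=B$, and the MacWilliams identity then reads
\[
A_j=\frac{1}{q^n}\sum_{i}K_j(i;n,q^2)\,A_i,
\]
where $K_j$ are the Krawtchouk polynomials in alphabet size $q^2$. The $k$-uniformity condition gives $A_0=1$ and $A_1=\dots=A_k=0$. We also have $A_j\ge 0$ for all $j$.

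A second, stronger family of constraints comes from the shadow enumerator, or equivalently from positivity of the reduced states. Scott's argument actually uses the following fact: for a pure state, the number of vanishing low weights forces the weight distribution to be determined up to a few free parameters. With $k=\lfloor n/2\rfloor-l$, the MacWilliams identities leave roughly $2l$ (or $2l+1$, depending on parity) unknowns $A_{k+1},\dots,A_{k+2l+1}$, or the corresponding top weights. The plan is therefore:
\begin{enumerate}
\item Express every $A_j$, through the truncated MacWilliams system, as an explicit linear function of those few free parameters.
\item Exhibit a nonnegative combination of the constraints $A_j\ge 0$, for suitable $j$ near $n$ (Scott used $A_n$ for even $n$ and $A_{n-1},A_n$ for odd $n$), that yields a contradiction once $n$ exceeds the stated threshold.
\end{enumerate}

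\textbf{Concretely.} The equations for indices $j\le k$ form a triangular-type system. Its unique solution can be written through binomial-sum identities of the form
\[
A_{n-t}=\sum_{s}(-1)^{s}\binom{n}{s}\bigl(q^{n-2k-\dots}-1\bigr)\cdots ,
\]
expressing the top weights $A_{n-t}$ for $t\le 2l$ in terms of $q$, $n$ and the free parameters. Then I would take a combination $\sum_t c_t A_{n-t}$ whose coefficients $c_t\ge 0$ are chosen so that the free parameters cancel. What remains is a polynomial in $n$ and $q^2$. I would verify that it is negative exactly when $n>4q^2$ (for $l=1$, $n$ even), $n>4q^2+4q+1$ (for $l=1$, $n$ odd), and analogously for $l=2$. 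That negativity contradicts nonnegativity.

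\textbf{Parity.} Even and odd $n$ are handled separately. This is because $n-2k$ equals $2l$ or $2l+1$, which changes the powers of $q$ that appear. That change is what accounts for the extra $+(2l+2)q$ term in the odd-$n$ thresholds.

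\textbf{Main obstacle.} The hard step is finding the certificate $c_t$: it must eliminate all free parameters while leaving a sign-definite expression with exactly the sharp threshold. For $l=1$ I expect a two-term combination to suffice. For $l=2$ I would first try the ansatz $c_t=\binom{2l}{t}$-type weights, or solve the small dual LP symbolically in $n$ and $q$. I would then verify the resulting inequality, for example the quadratic $n^2-(6q^2+\dots)n+\dots$ in the $l=2$ case, by elementary factorization.
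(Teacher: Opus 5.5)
The gap is the certificate. Your LP is fine: $A=B$ with the $q^2$-ary Krawtchouk MacWilliams identity, $A_0=1$, $A_1=\dots=A_k=0$ and $A_j\ge0$ is equivalent to the paper's system \eqref{eqA}--\eqref{eqC}. But you never construct the certificate, and the family you propose to search cannot reach the stated thresholds. That family is nonnegative combinations of the top coefficients $A_{n-t}$, $t\le 2l$.

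Take $l=1$, $q=2$, $n=18>4q^2$. Then $k=8$, and the only undetermined unitary coefficient is $A'_9$. Put $\epsilon=2^9A'_9-\binom{18}{9}\ge0$. Inverting \eqref{eqC} gives $A_{16}=610011-36\epsilon$, $A_{17}=-132192+9\epsilon$ and $A_{18}=16578-\epsilon$. All three are nonnegative for every $\epsilon\in[14688,16578]$. So any nonnegative combination of them that eliminates $\epsilon$ is a nonnegative constant, and no contradiction can come out.

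The same happens already at $l=0$. For $q=2$, $n=8$ the determined enumerator has $A_8=3>0$ but $A_6=-84<0$. Scott's threshold is exactly the condition $A_{\lfloor n/2\rfloor+2}\ge0$, not a condition on $A_n$ or $A_{n-1}$, so your description of Scott's argument is also off.

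The missing idea is to put the certificate on the lowest nonvanishing coefficients $A_{k+1},\dots,A_{k+l+2}$, just above the distance. The paper cites Ning et al.\ for $l=1,2$ rather than reproving them, but its general argument has exactly this shape:
\begin{itemize}
\item weight $l$ on $A_{k+l+1}$;
\item weight $\frac{l+1}{n-k-l-1}$ on $A_{k+l+2}$;
\item nonnegative weights on $A_{k+1},\dots,A_{k+l}$, chosen to annihilate the free directions.
\end{itemize}
The free directions are the undetermined $A'_{k+1},\dots,A'_{k+l}$, so the solution set is $l$-dimensional, not $2l$ or $2l+1$. What survives is $\frac{l+1}{k+l+2}(q^2+1)\ge1$ for even $n$ and $\frac{l+1}{k+l+2}(q^2+q+1)\ge1$ for odd $n$, which are precisely the thresholds.

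Concretely, for $l=1$ and $n=2h$, with $\epsilon=q^hA'_h-\binom{n}{h}\ge0$, you get
\begin{itemize}
\item $A_h=\epsilon$;
\item $A_{h+1}=(q^2-1)\binom{n}{h+1}-h\epsilon$;
\item $A_{h+2}=(q^4-1)\binom{n}{h+2}-(h-1)(q^2-1)\binom{n}{h+1}+\binom{h}{2}\epsilon$.
\end{itemize}
Hence
\[
A_{h+1}+\frac{2}{h-1}A_{h+2}=(q^2-1)\binom{n}{h+1}\Bigl(\frac{2(q^2+1)}{h+2}-1\Bigr),
\]
which is negative exactly when $n>4q^2$; it equals $-11934$ at $n=18$, $q=2$. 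Note that this uses $A_{k+3}$, one index past the $2l$-block.

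The odd case and $l=2$ work the same way. For example, for $l=1$ and odd $n=2h+1$, weight $q$ on $A_h$ together with the same two-term tail gives the factor $\frac{2(q^2+q+1)}{h+2}-1$. In every case the relevant triangular system is the block of \eqref{eqC} with indices $k+1,\dots,k+2l$, not the equations with $j\le k$.
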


Based on Theorem~\ref{thm:ning}, Ning et al.\ \cite{NingShiLuoZhang2025} further conjectured that for each fixed defect $l$ there should exist a
Scott-type nonexistence threshold with leading term $(2l+2)q^{2}$ (up to lower-order corrections)~\cite{NingShiLuoZhang2025}. Our first result 
confirms the predicted leading behavior and, moreover, provides a closed-form bound
with explicit lower-order terms and parity dependence. The result is stated below,  which  settles the Scott-type bound problem for \emph{arbitrary} defect.

\begin{thm}\label{thmmain}
Let $q\ge 2$ and $l\in\mathbb Z_{\ge0}$. Then AME states of defect $l$ in $(\C^q)^{\otimes n}$ do not exist if
\[
n>
\begin{cases}
2(l+1)q^{2}+2(l-1), & n \ \text{even},\\
2(l+1)q^{2}+2(l+1)q+(2l-1), & n \ \text{odd}.
\end{cases}
\]
\end{thm}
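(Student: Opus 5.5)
We argue via quantum weight enumerators and a linear-programming infeasibility certificate, in the spirit of Scott's original proof. Let $\kp$ be a $k$-uniform state in $\cqn$ with $k=\lfloor n/2\rfloor-l$. Let $A_j$ and $B_j$ be its Shor--Laflamme enumerators. For a pure state $A_j=B_j$, and $A_0=1$, $A_j\ge 0$. The quantum MacWilliams identity reads
\[
A_j=\frac{1}{q^{n}}\sum_{i=0}^{n}K_j(i;n,q^2)\,A_i ,
\]
where $K_j(\cdot\,;n,q^2)$ is the Krawtchouk polynomial over an alphabet of size $q^2$. The $k$-uniformity (distance $k+1$) condition gives $A_1=\dots=A_k=0$. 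Thus the unknowns are $A_{k+1},\dots,A_n\ge0$, and there are $n-k\approx n/2+l$ of them. Equivalently, the reduced-purity form of the identities holds: for every subset $S$ with $|S|\le k$ the reduced state is maximally mixed, and by the Schmidt decomposition the same holds for $|S|\ge n-k$. Summing $\operatorname{tr}\rho_S^2$ over subsets of size $m$ then fixes $\sum_j\binom{n-j}{n-m}A_j$ for all $m\le k$ and all $m\ge n-k$. This gives about $2k+2$ linear equations, which is the ``truncated'' system: $2l$ of the top/bottom constraints are missing compared with the AME case.

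The plan is to find an explicit nonnegative combination of these equations, or equivalently a polynomial $f$ of low degree in the weight variable $j$, with two properties. First, $f(j)\le 0$ on the support $j\in\{k+1,\dots,n\}$, strictly negative at some point. Second, the known linear functional, evaluated using $A_0=1$ and the fixed partial sums, is positive. Together these give a contradiction. Scott's certificate for $l=0$ uses the single equation at $m=k+1$, which gives $\sum_j\binom{n-j}{k+1}A_j$ with a sign condition. Generalising, I would combine the equations indexed by $m=n-k,\dots,n-k+?$ through a polynomial of degree $l+1$ in $j$. The natural choice is $f(j)=\binom{n-j}{l+1}$-type falling factorials. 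These vanish for $j>n-l-1$ and are positive below that, so the weight-$A$ terms have a definite sign. Concretely, I would take the identity obtained from $m=n-k-?$ corresponding to $\sum_{j}\binom{n-j}{k+l+1}A_j$. This is determined by the subset-size-$(k+l+1)$ purities; for odd $n$ (and for even $n$ with adjusted offsets) that size is again $\ge n-k$.

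Expressing both sides in terms of $q$, the contradiction reduces to an inequality of the form $q^{2(l+1)}\binom{\cdot}{\cdot}<\binom{n}{\cdot}\cdot(\ldots)$. Rearranged, this is a comparison between a product of $l+1$ consecutive ratios $\frac{n-i}{k+?-i}$ and $q^{2}$ raised to a power. I would verify that this inequality holds exactly when $n$ exceeds the stated thresholds, treating even $n=2k+2l$ and odd $n=2k+2l+1$ separately. The odd case produces the extra $2(l+1)q$ term, because of the half-integer shift, just as in Scott's bound. The checks are: the specialisations $l=0,1,2$ reproduce Theorem~\ref{thm:scott} and Theorem~\ref{thm:ning}. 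For general $l$, bounding the product by AM--GM-type or telescoping estimates should give the linear-in-$l$ correction $2(l-1)$, respectively $2l-1$.

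The main obstacle is choosing the certificate so that the resulting inequality is clean enough to yield exactly the affine-in-$l$ lower-order terms for all $l$. If a single equation gives too weak a bound, I would instead use a signed combination of $l+1$ consecutive truncated equations, with weights chosen to cancel the contribution of $A_{k+1},\dots,A_{k+l}$ (the coordinates that would be absent in the AME case). I would then verify nonnegativity of the remaining coefficients via a Vandermonde / finite-difference argument, and finally reduce the positivity condition to a quadratic inequality in $n$ whose root gives the stated threshold.
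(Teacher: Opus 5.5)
\paragraph{Gap 1: the concrete certificate cannot produce a contradiction.} There are two separate problems.

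First, each identity available to you has the form $\sum_{j\ge k+1}\binom{n-j}{m-j}A_j=q^mA'_m-\binom{n}{m}\ge 0$, with nonnegative coefficients. A weight such as $f(j)=\binom{n-j}{l+1}\ge 0$ is just such an identity (or a nonnegative combination of them). Any such equation is always satisfiable with $A_j\ge 0$, so it can never give a contradiction. Your condition ``$f\le 0$ on the support'' is incompatible with that choice of $f$. A contradiction needs a genuinely signed combination. Even Scott's $l=0$ argument is of this kind: it is $A_{k+2}\ge 0$, i.e.\ the identity at $m=k+2$ minus $(n-k-1)$ times the identity at $m=k+1$.

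Second, the purity you propose to use is not known. $k$-uniformity together with the Schmidt decomposition fixes $\operatorname{Tr}\rho_S^2$ only for $|S|\le k$ and $|S|\ge n-k$. (In the latter range $\rho_S$ has purity $q^{-(n-|S|)}$; it is not maximally mixed.) Both $k+l+1$ and $n-k-l-1$ lie strictly between $k$ and $n-k$ as soon as $l\ge 2$ for even $n$, or $l\ge 1$ for odd $n$.

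\paragraph{Gap 2: the missing idea is the middle symmetry.} You start from the full identity $A=B$, but the truncated system you actually work with keeps only the fixed-purity equations. It discards the relations $A'_i=A'_{n-i}$ for $k<i<n-k$, and the paper's proof runs precisely on those.

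The paper takes the block of \eqref{eqC} with $i=k+1,\dots,k+2l$. It uses the middle symmetry to pair the unknown purities, which leaves an $l$-parameter family of solutions. It then exhibits an explicit $\bm{\lambda}\ge 0$ supported on $A_{k+1},\dots,A_{k+l+2}$:
\begin{itemize}
\item $\lambda_{l+1}=l$ and $\lambda_{l+2}=\frac{l+1}{n-k-l-1}$;
\item $\lambda_1,\dots,\lambda_l$ are computed via Lemma~\ref{comeq} and shown nonnegative using $k>(l+1)q^2-1$ (resp.\ $k>(l+1)q^2+(l+1)q-1$);
\item $\bm{\alpha}\trans=\bm{\lambda}\trans M^{-1}$ satisfies $\alpha_i+q^{2l-2i}\alpha_{2l-i}=0$ (resp.\ $\alpha_i+q^{2l+1-2i}\alpha_{2l+1-i}=0$), so it annihilates every free parameter.
\end{itemize}
What survives is $l\,a_{k+l+1}+\frac{l+1}{n-k-l-1}a_{k+l+2}\ge 0$. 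This is exactly $(l+1)(q^2+1)\ge k+l+2$ (resp.\ $(l+1)(q^2+q+1)\ge k+l+2$). It is linear in $k$ and yields the exact lower-order terms, with no AM--GM estimate and no quadratic in $n$.

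Without the pairing, the block gives nothing. The only combination of $A_{k+1},\dots,A_{k+2l}$ free of unknown purities is at best the identity at $i=k+2l$ itself, whose right-hand side is positive. Certificates built only from fixed-purity identities are much weaker. For even $n$, the best one using sizes $n-k$ and $n-k+1$ rules out only $n\gtrsim 4lq^2$, which agrees with the theorem only when $l=1$.

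Your fallback is to cancel $A_{k+1},\dots,A_{k+l}$ and check signs by a Vandermonde argument. It specifies no weights, and it targets the wrong objects: it is the unknown purities, not these $A_j$, that must be eliminated. As it stands, the proposal does not contain a proof.
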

When $l=0$, Theorem~\ref{thmmain} recovers Scott's bound in Theorem~\ref{thm:scott}; when $l=1,2$, it recovers the bounds of Ning et al.\ in Theorem~\ref{thm:ning}. 
 For $q=2,3$, Theorem~\ref{thmmain} does not reach the shadow-bound regimes for defects of order approximately $n/6$ and $n/14$, respectively. However, it gives the first explicit Scott-type bound for arbitrary defect $l$ and arbitrary local dimension $q$.

 By the standard correspondence between $k$-uniform states and one-dimensional pure quantum error-correcting codes,  the defect parameter also measures the deficiency from the quantum Singleton extremal regime.
 Then Theorem~\ref{thmmain} can  be read as a length bound for pure quantum codes $((n,1,d))_q$ near the quantum Singleton regime. Indeed, the associated code distance is $d=k+1$, and an AME state of defect $l$ corresponds to
\[
d=\left\lfloor\frac n2\right\rfloor-l+1.
\]
Thus  Theorem~\ref{thmmain} rules out one-dimensional pure quantum codes with this distance whenever $n$ exceeds the same parity-dependent threshold.

Further, as a consequence of Theorem~\ref{thmmain}, we   derive an explicit  asymptotic upper bound on $k/n$ for $k$-uniform states at fixed local dimension $q$.
 \begin{thm}\label{thm:asymptotic_kn_bound_intro}
    Fix an integer $q\ge 2$. Suppose that there exists a $k$-uniform state in $(\C^q)^{\otimes n}$. Then
    \[
    \frac{k}{n}\le\begin{cases}
        \frac{q^2}{2(q^2+1)}+O(\frac{1}{n}), & n \text{ even},\\
        \frac{q^2+q}{2(q^2+q+1)}+O(\frac{1}{n}), & n \text{ odd}.
    \end{cases}
    \]
    \end{thm}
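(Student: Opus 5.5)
The plan is to derive Theorem~\ref{thm:asymptotic_kn_bound_intro} from Theorem~\ref{thmmain} by treating the defect $l$ as a free parameter and inverting the threshold.

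Suppose a $k$-uniform state exists in $(\C^q)^{\otimes n}$. Since $k\le\lfloor n/2\rfloor$ always holds, we may write $k=\lfloor n/2\rfloor-l$ with $l\in\mathbb Z_{\ge0}$. The state is then an AME state of defect $l$. By the contrapositive of Theorem~\ref{thmmain}, $n$ is at most the threshold:
\[
n\le 2(l+1)q^2+2(l-1)\quad (n \text{ even}),\qquad n\le 2(l+1)q^2+2(l+1)q+2l-1\quad (n\text{ odd}).
\]

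Both inequalities are linear in $l$, so we solve them for $l$ to get a lower bound on the defect.

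\emph{Even $n$.} We have $n\le 2l(q^2+1)+2q^2-2$, so
\[
l\ge \frac{n-2q^2+2}{2(q^2+1)}.
\]
Hence
\[
k=\frac n2-l\le \frac n2-\frac{n}{2(q^2+1)}+\frac{q^2-1}{q^2+1}=\frac{nq^2}{2(q^2+1)}+O(1),
\]
and dividing by $n$ gives the first case.

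\emph{Odd $n$.} We have $n\le 2l(q^2+q+1)+2q^2+2q-1$, so
\[
l\ge \frac{n-2q^2-2q+1}{2(q^2+q+1)}.
\]
Since $\lfloor n/2\rfloor=(n-1)/2$, this gives
\[
k\le \frac{n-1}{2}-\frac{n}{2(q^2+q+1)}+O(1)=\frac{n(q^2+q)}{2(q^2+q+1)}+O(1),
\]
and dividing by $n$ gives the second case.

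No step is a real obstacle. The only points to check are that the $O(1/n)$ constants depend only on the fixed $q$ and not on $l$, which is clear from the formulas above, and that the case $l=0$ is covered, which it is since Theorem~\ref{thmmain} applies to all $l\ge0$. In particular, all the substantive work lies in Theorem~\ref{thmmain}, which this deduction takes as given.
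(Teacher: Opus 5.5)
Your proposal is correct and follows essentially the same route as the paper. You treat the state as an AME state of defect $l=\lfloor n/2\rfloor-k$, invert the threshold of Theorem~\ref{thmmain} into a linear lower bound on $l$, and substitute into $k=\lfloor n/2\rfloor-l$. Your even-case additive constant $\frac{q^2-1}{q^2+1}$ matches the paper's exactly, and your odd case supplies the computation the paper leaves as ``similar.''
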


Ning et al.\ \cite[Theorem IV.5]{NingShiLuoZhang2025} also derived asymptotic upper bounds on $k/n$ for fixed $q$ through an implicit logarithmic inequality, reporting numerical constants for $4\le q\le 9$.
We will show that for every $q\ge4$, the constants in Theorem~\ref{thm:asymptotic_kn_bound_intro} are strictly smaller
than the bound obtainable from that asymptotic criterion in \cite{NingShiLuoZhang2025}, thereby
providing an analytic improvement. This result will be given in Proposition~\ref{prop:strict_improvement_ning}.  For better clarity, in Table~\ref{tab:compare}, we compare our explicit constants with the numerical constants for $q\in\{4,5,6,7,8,9\}$ reported by Ning et al.,  see Table~IV in \cite{NingShiLuoZhang2025}.


    
    \begin{table}[htbp]
        \centering
        \begin{tabular}{cc|cccccc}
        \hline
        \multicolumn{2}{c|}{$q$} & 4 & 5 & 6 & 7 & 8 & 9 \\
        \hline
        \multicolumn{2}{c|}{Ning et al. \cite{NingShiLuoZhang2025}}
        & 0.479 & 0.487 & 0.491 & 0.494 & 0.495 & 0.496 \\
        \hline
        \multirow{2}{*}{This paper}
        & even $n$
        & 0.471 & 0.481 & 0.486 & 0.490 & 0.492 & 0.494 \\
        & odd $n$
        & 0.476 & 0.484 & 0.488 & 0.491 & 0.493 & 0.495 \\
        \hline
        \end{tabular}
        \caption{Comparison of asymptotic upper bounds on $k/n$ for fixed $4\leq q\leq 9$.
        The numerical bounds of Ning et al.\  \cite{NingShiLuoZhang2025} are listed in the second row.
        The last two rows give our explicit bounds from Theorem~\ref{thm:asymptotic_kn_bound_intro}, distinguished by the parity of $n$. When $q=2,3$, the best ratios are $1/3 \approx 0.333$ and $3/7 \approx 0.429$, respectively, by Rains' shadow bound \cite{Rains1998Shadow} and its qutrit extension \cite{ShiNingZhaoZhang2025}.}
        \label{tab:compare}
    \end{table}   

At a high level, our proof is an enumerator-based linear-programming argument. We impose the standard constraints satisfied by quantum weight and unitary enumerators, linked by the MacWilliams-type identities~\cite{HuberEltschkaSiewertGuhne2018,NingShiLuoZhang2025,Rains1998WeightEnum,ShorLaflamme1997}. For an AME state of defect $l$, these constraints contain a finite truncated MacWilliams subsystem involving $2l$ non-negative coefficients
$A_{k+1},A_{k+2},\dots,A_{k+2l}$.
The key contribution is an explicit infeasibility certificate for this truncated system once $n$ exceeds the claimed threshold. This certificate applies uniformly to arbitrary defect $l$, extends the previous analyses for $l=1,2$ in~\cite{NingShiLuoZhang2025}, and yields the closed-form Scott-type bound in Theorem~\ref{thmmain}.

\medskip
\noindent\textbf{Organization.}
In Section~\ref{sec:preliminary} we recall the enumerator constraints needed in the proof. Section~\ref{sec:proof}
establishes Theorem~\ref{thmmain}. Section~\ref{sec:asymptotic} derives the asymptotic bound on $k/n$
and compares it with the bounds of Ning et al.\  \cite{NingShiLuoZhang2025}.

\section{Preliminaries}\label{sec:preliminary}

\subsection{Uniform entangled states}

Throughout the paper, let $q\ge 2$ and let $\mathcal H=(\C^q)^{\otimes n}$ be the Hilbert space of an $n$-partite quantum system with local dimension $q$. For integers $a\le b$, write $[a,b]=\{a,a+1,\dots,b\}$ and $[n]=[1,n]$. A pure state in $\mathcal H$ will be denoted by $|\psi\rangle$, and its density matrix is $\rho=|\psi\rangle\langle\psi|$. For a subset $S\subseteq[n]$, we write $S^c=[n]\setminus S$ and define the reduced density matrix on the parties in $S$ by $\rho_S=\operatorname{Tr}_{S^c}(\rho)$, where $\operatorname{Tr}_{S^c}$ denotes the partial trace over the tensor factors indexed by $S^c$. Thus $\rho_S$ is a positive semidefinite matrix of trace one acting on $(\C^q)^{\otimes |S|}$. For standard background on density operators, partial traces, and reduced states, we refer to \cite{NielsenChuang2010,Watrous2018}.

A state $|\psi\rangle$ is called \emph{$k$-uniform} if every $k$-party reduced density matrix is maximally mixed, namely
\[
\rho_S=\frac{I_{q^k}}{q^k},\qquad \forall\, S\subseteq[n],\ |S|=k.
\]
Equivalently, every reduced density matrix on at most $k$ parties is maximally mixed. Indeed, if $T\subseteq[n]$ and $|T|\le k$, then one can choose $S\supseteq T$ with $|S|=k$, and tracing $\rho_S=I_{q^k}/q^k$ over $S\setminus T$ gives $\rho_T=I_{q^{|T|}}/q^{|T|}$. In this sense, $k$-uniformity says that no subsystem of size at most $k$ contains any local information about the global pure state.

The largest possible value of $k$ is $\lfloor n/2\rfloor$. To see this, consider a bipartition $S\sqcup S^c=[n]$. Since $\rho$ is pure, the two reduced density matrices $\rho_S$ and $\rho_{S^c}$ have the same nonzero eigenvalues, counted with multiplicity. If $\rho_S$ is maximally mixed, then $\rho_S$ has rank $q^{|S|}$. Hence $q^{|S|}\le q^{n-|S|}$, and therefore $|S|\le n/2$. The extremal case $k=\lfloor n/2\rfloor$ is called an \emph{absolutely maximally entangled} state, or an \emph{AME state}.

We will also use the standard coding-theoretic interpretation. A $k$-uniform state in $(\C^q)^{\otimes n}$ is equivalently a one-dimensional pure quantum error-correcting code with parameters $((n,1,k+1))_q$; in particular, its code distance is $d=k+1$~\cite{Scott2004,CalderbankRainsShorSloane1998}. More generally, following Ning et al.~\cite{NingShiLuoZhang2025}, if $k=\lfloor n/2\rfloor-l$ for some integer $l\ge0$, we call a $k$-uniform state an \emph{AME state of defect $l$}. In code language, this corresponds to
\[
d=\left\lfloor\frac n2\right\rfloor-l+1.
\]
Thus the defect parameter measures the distance from the AME threshold, or equivalently the deficiency from the near-Singleton regime for one-dimensional pure quantum codes.

\subsection{Weight and unitary enumerators}

We now recall the enumerator formalism used in the proof, following the quantum weight-enumerator framework of Shor--Laflamme and Rains~\cite{ShorLaflamme1997,Rains1998WeightEnum}. Choose an orthogonal unitary error basis $\{e_a\}_{a=0}^{q^2-1}\subseteq \C^{q\times q}$ satisfying $e_0=I_q$ and $\operatorname{Tr}(e_a^\dagger e_b)=q\delta_{ab}$. For an $n$-fold tensor product $E=e_{a_1}\otimes e_{a_2}\otimes\cdots\otimes e_{a_n}$, define its weight by $\wt(E)=|\{r\in[n]:a_r\ne0\}|$. Thus the weight counts the number of tensor factors on which the error operator is nontrivial.

The \emph{weight enumerator} of the pure state $\rho=|\psi\rangle\langle\psi|$ is the polynomial $A(x,y)=\sum_{i=0}^n A_i x^{n-i}y^i$, where
\[
A_i=\sum_{\wt(E)=i}|\operatorname{Tr}(E\rho)|^2.
\]
The coefficients $A_i$ are nonnegative by definition, and $A_0=|\operatorname{Tr}(\rho)|^2=1$. One should think of $A_i$ as measuring the total strength of the $i$-body correlations of the state with respect to the chosen error basis. If $|\psi\rangle$ is $k$-uniform, then all nontrivial observables supported on at most $k$ parties have zero expectation \cite{Scott2004,NingShiLuoZhang2025}, so 
\[A_1=A_2=\cdots=A_k=0,\quad A_i\geq 0\text{ for }i\in [k+1,n].\] This is the first source of linear constraints in the proof.

The second enumerator is expressed in terms of purities of reduced density matrices. For $S\subseteq[n]$, define $A'_S=\operatorname{Tr}(\rho_S^2)$. This quantity is the purity of the reduced state on $S$. For each $i\in[0,n]$, set \[A'_i=\sum_{|S|=i}A'_S =\sum_{|S|=i} \operatorname{Tr}(\rho_S^2)\] and define the \emph{unitary enumerator} $A'(x,y)=\sum_{i=0}^n A'_i x^{n-i}y^i$. Since $\rho$ is pure, the reductions $\rho_S$ and $\rho_{S^c}$ have the same nonzero eigenvalues. Hence $\operatorname{Tr}(\rho_S^2)=\operatorname{Tr}(\rho_{S^c}^2)$, and after summing over all subsets of size $i$ we get the symmetry \[A'_i=A'_{n-i}, ~i\in [0,n].\]
For any density matrix $\sigma$ on a $d$-dimensional Hilbert space, one has $\operatorname{Tr}(\sigma^2)\ge 1/d$, with equality precisely when $\sigma=I_d/d$; this follows from Cauchy--Schwarz applied to the eigenvalues of $\sigma$ \cite{NielsenChuang2010,Watrous2018}. Hence for every $S$ with $|S|=i$, one has $\operatorname{Tr}(\rho_S^2)\ge 1/q^i$. So for any $i\in [0,n]$,
\[A'_i\ge q^{-i}\binom{n}{i}.\]
If $|\psi\rangle$ is $k$-uniform and $i\le k$, then every $i$-party reduced density matrix is maximally mixed, so $\operatorname{Tr}(\rho_S^2)=1/q^i$ for $|S|=i$. Therefore \[A'_i=q^{-i}\binom{n}{i} \text{  for $0\le i\le k$.}\] 

\subsection{Enumerator constraints}

The weight and unitary enumerators are linked by a quantum MacWilliams-type identity, originating from the enumerator theory of quantum error-correcting codes~\cite{ShorLaflamme1997,Rains1998WeightEnum,HuberEltschkaSiewertGuhne2018}. In the present rank-one setting of a pure state, the form needed in this paper is
\begin{equation}\label{eq:mcwlm}
    q^iA'_i=\sum_{j=0}^i A_j\binom{n-j}{i-j},\qquad 0\le i\le n.
\end{equation}

For completeness, we indicate why this relation holds. The reduced density matrix $\rho_S$ can be expanded in the tensor-product error basis supported inside $S$. Orthogonality of the basis gives
\[
q^{|S|}\operatorname{Tr}(\rho_S^2)
=
\sum_{\supp(E)\subseteq S}|\operatorname{Tr}(E\rho)|^2.
\]
Now sum this identity over all subsets $S\subseteq[n]$ with $|S|=i$. An error operator of weight $j$ is supported in exactly $\binom{n-j}{i-j}$ subsets $S$ of size $i$. This gives Eq. (\ref{eq:mcwlm}).

We collect all the constraints on the two kinds of enumerators in the following lemma, which will be used to deduce the contradiction in the rest of the paper.

\begin{lemma}[Enumerator constraints for $k$-uniform states \cite{NingShiLuoZhang2025}]\label{lem:enumerator_constraints}
Let $|\psi\rangle\in(\C^q)^{\otimes n}$ be a $k$-uniform state, and let
$A(x,y)=\sum_{i=0}^{n}A_{i}x^{n-i}y^{i}$ and $A'(x,y)=\sum_{i=0}^{n}A'_{i}x^{n-i}y^{i}$ be its weight and unitary
enumerators. Then
\begin{align}
&A_{0}=1,\qquad A_{1}=A_{2}=\cdots=A_{k}=0,\qquad A_{i}\ge 0,\ \ \forall\, i\in[k+1,n]; \label{eqA}\\
&A'_{i}=\frac{1}{q^{i}}\binom{n}{i},\ \ \forall\, i\in[0,k];\qquad
A'_{i}\ge \frac{1}{q^{i}}\binom{n}{i},\ \ \forall\, i\in[k+1,\lfloor n/2\rfloor];\qquad
A'_{i}=A'_{n-i},\ \ \forall\, i\in[0,n]; \label{eqB}\\
&q^{i}A'_{i}=\sum_{j=0}^{i}A_{j}\binom{n-j}{i-j},\qquad \forall\, i\in[0,n]. \label{eqC}
\end{align}
Consequently, if the constraints of Eqs.~\eqref{eqA}--\eqref{eqC} are infeasible for some given $(n,q,k)$, then no $k$-uniform state
exists in $(\C^q)^{\otimes n}$.
\end{lemma}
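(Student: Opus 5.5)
The statement to prove is Lemma~\ref{lem:enumerator_constraints}. It collects facts already established in this section, so the proof is an assembly of those observations. It has two parts: verifying each of Eqs.~\eqref{eqA}--\eqref{eqC} for a $k$-uniform state, and then noting the contrapositive.

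For Eq.~\eqref{eqA}, we use the definition $A_i=\sum_{\wt(E)=i}|\operatorname{Tr}(E\rho)|^2$. This gives $A_i\ge0$ for all $i$, and $A_0=|\operatorname{Tr}\rho|^2=1$ since the only weight-zero operator is $I$. For $1\le i\le k$, fix a weight-$i$ operator $E$ with support $T$, $|T|=i\le k$. Then $\operatorname{Tr}(E\rho)=\operatorname{Tr}(E_T\rho_T)$, where $E_T$ is the product of the nontrivial factors $e_{a_r}$. By $k$-uniformity, $\rho_T=I/q^{i}$, so the trace equals $q^{-i}\prod_{r\in T}\operatorname{Tr}(e_{a_r})$. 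Each factor vanishes by orthogonality to $e_0=I_q$. Hence $A_1=\dots=A_k=0$.

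For Eq.~\eqref{eqB}, the equality $A'_i=q^{-i}\binom ni$ for $i\le k$ follows because every $\rho_S$ with $|S|\le k$ is maximally mixed, as observed earlier. The inequality for general $i$ is the purity bound $\operatorname{Tr}\sigma^2\ge 1/d$, obtained from Cauchy--Schwarz on the eigenvalues, summed over all $S$ with $|S|=i$; we only need it for $i\in[k+1,\lfloor n/2\rfloor]$. The symmetry $A'_i=A'_{n-i}$ uses the fact that $\rho_S$ and $\rho_{S^c}$ share their nonzero spectrum when $\rho$ is pure. This comes from the Schmidt decomposition across $S|S^c$, and it gives $\operatorname{Tr}\rho_S^2=\operatorname{Tr}\rho_{S^c}^2$. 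We then sum over $|S|=i$, using the bijection $S\mapsto S^c$ between $i$-subsets and $(n-i)$-subsets.

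For Eq.~\eqref{eqC}, we expand $\rho_S=q^{-|S|}\sum_{\supp E\subseteq S}\operatorname{Tr}(E^\dagger\rho)\,E_S$ in the orthogonal basis of $(\C^q)^{\otimes|S|}$. Orthogonality then gives $q^{|S|}\operatorname{Tr}\rho_S^2=\sum_{\supp E\subseteq S}|\operatorname{Tr}(E\rho)|^2$. We sum over all $|S|=i$ and count: each $E$ of weight $j$ lies in exactly $\binom{n-j}{i-j}$ such sets $S$. The only step needing care is the normalization in this expansion, namely checking that the coefficient is $q^{-|S|}\operatorname{Tr}(E_S^\dagger\rho_S)$ and that $\operatorname{Tr}(E_S^\dagger\rho_S)=\operatorname{Tr}(E^\dagger\rho)$. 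This follows from $\operatorname{Tr}(e_a^\dagger e_b)=q\delta_{ab}$. Nothing here is a real obstacle.

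Finally, all three groups of constraints hold for the enumerators of any $k$-uniform state. So if the system \eqref{eqA}--\eqref{eqC} has no solution for given $(n,q,k)$, no such state can exist.
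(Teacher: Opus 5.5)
Your proposal is correct and takes the same approach as the paper. The paper cites the lemma from Ning et al.\ and justifies it in the preliminaries by the same facts you use: the definition of $A_i$, vanishing low-weight correlations under $k$-uniformity, the purity bound $\operatorname{Tr}\sigma^2\ge 1/d$, equal spectra of complementary reductions, and the error-basis expansion of $\rho_S$ with the $\binom{n-j}{i-j}$ count. You simply make explicit details the paper leaves implicit, such as $\operatorname{Tr}(e_a)=0$ for $a\neq 0$ and the normalization of the expansion coefficients.
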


\subsection{Main ideas}

Lemma~\ref{lem:enumerator_constraints} translates the existence of a $k$-uniform state into a linear-programming feasibility problem over quantum enumerators. In the positive-defect part of the proof of Theorem~\ref{thmmain}, we fix $l\ge1$, equivalently $k=\lfloor n/2\rfloor-l$. Since $A_1,\dots,A_k$ vanish, the first potentially nonzero weight-enumerator coefficients are $A_{k+1},A_{k+2},\dots$. We extract from the MacWilliams-type relation~\eqref{eqC} the truncated subsystem with
\[
i=k+1,k+2,\dots,k+2l.
\]
After the known terms $A_0=1$ and $A_1=\cdots=A_k=0$ are separated, this subsystem gives a lower-triangular linear system for
\[
A_{k+1},A_{k+2},\dots,A_{k+2l}.
\]

The symmetry and lower bounds for the unitary enumerator in Eq.~\eqref{eqB} provide the remaining constraints. Thus the nonexistence problem reduces to proving infeasibility of a finite truncated MacWilliams linear program. The main point of the proof is to construct an explicit certificate which annihilates the free parameters in this truncated system and forces a negative value for a nonnegative enumerator expression. This gives a contradiction to the enumerator constraints and yields the claimed Scott-type bound.

\section{Proof of Theorem~\ref{thmmain}}\label{sec:proof}
    Suppose that a $k$-uniform state $\kp$ exists with $k=\lfloor n/2\rfloor-l$; we will derive a contradiction. The case $l=0$ is Scott's bound~\cite{Scott2004}, while $l=1,2$ is covered by Ning et al.~\cite{NingShiLuoZhang2025}. Hence we assume $l\ge 3$ from now on. Throughout the proof, we use the convention that $\binom{a}{b}=0$ whenever $b<0$ or $b>a$. Following the enumerator constraints framework recalled in Lemma~\ref{lem:enumerator_constraints}, we first isolate a finite block of Eq.~\eqref{eqC}. Put
    \[
    M\triangleq\begin{pmatrix}
        \binom{n-k-1}{0} &  &  & \\[2pt]
        \binom{n-k-1}{1} & \binom{n-k-2}{0} &  & \\[2pt]
        \vdots & \vdots & \ddots & \\[2pt]
        \binom{n-k-1}{2l-1} & \binom{n-k-2}{2l-2} & \cdots & \binom{n-k-2l}{0}
    \end{pmatrix}_{2l\times 2l}, \qquad M_{i,j}\triangleq\begin{cases}
        \binom{n-k-j}{i-j}, &i\ge j,\\
        0, &i<j.
    \end{cases}
    \]
    This lower triangular matrix will be used to collect the unknown coefficients. For $i=k+1,\dots,k+2l$, Eq.~\eqref{eqC} gives the equivalent system
    \begin{equation}\label{eqM}
        M\begin{pmatrix}
            A_{k+1} \\[2pt] A_{k+2} \\[2pt] \vdots \\[2pt] A_{k+2l}
        \end{pmatrix}=\begin{pmatrix}
            q^{k+1}A_{k+1}' \\[2pt] q^{k+2}A_{k+2}' \\[2pt] \vdots \\[2pt] q^{k+2l}A_{k+2l}'
        \end{pmatrix}-\begin{pmatrix}
            \binom{n}{k+1} \\[2pt] \binom{n}{k+2} \\[2pt] \vdots \\[2pt] \binom{n}{k+2l}
        \end{pmatrix}.
    \end{equation}
    A direct binomial inversion gives 
    \[
    M^{-1}=\begin{pmatrix}
        \binom{n-k-1}{0} &  &  & \\[2pt]
        -\binom{n-k-1}{1} & \binom{n-k-2}{0} &  & \\[2pt]
        \vdots & \vdots & \ddots & \\[2pt]
        -\binom{n-k-1}{2l-1} & \binom{n-k-2}{2l-2} & \cdots & \binom{n-k-2l}{0}
    \end{pmatrix}_{2l\times 2l}, \qquad (M^{-1})_{i,j}=\begin{cases}
        (-1)^{i+j}\binom{n-k-j}{i-j}, &i\ge j,\\
        0, &i<j.
    \end{cases}
    \]
    Write $\mathbf{v}_i$ for the $i$-th column of $M^{-1}$, where $1\le i\le 2l$.
    
    \subsection{\texorpdfstring{The Case of $n$ Even}{The Case of n Even}}
        Assume first that $n=2k+2l$. By the symmetry in Eq.~\eqref{eqB}, the last primed coefficient in Eq.~\eqref{eqM} is known:
        \[
        A_{k+2l}'=A_k'=\frac{1}{q^k}\binom{n}{k}.
        \]
        The remaining quantities
        \[
        A_{k+1},A_{k+2},\dots,A_{k+2l},A_{k+1}',A_{k+2}',\dots,A_{k+2l-1}'
        \]
        are treated as variables. 
        
        \textbf{Step 1: } We first give general forms of $A_{k+1},A_{k+2},\dots,A_{k+2l},A_{k+1}',A_{k+2}',\dots,A_{k+2l-1}'$ satisfying Eq.~\eqref{eqM}, which will be in Eq.~\eqref{sol_even} and \eqref{eq:ytilde_even}.
        Incorporating the symmetry relations among the primed variables, Eq.~\eqref{eqM} becomes
        \begin{equation}\label{eq_even}
            L_{(3l-1)\times(4l-1)}\begin{pmatrix}
                A_{k+1} \\ \vdots \\ A_{k+2l} \\ A_{k+1}' \\ \vdots \\ A_{k+2l-1}'
            \end{pmatrix}=\begin{pmatrix}
                -\binom{n}{k+1} \\ \vdots \\ -\binom{n}{k+2l-1} \\ (q^{2l}-1)\binom{n}{k+2l} \\ O_{(l-1)\times 1}
            \end{pmatrix},
        \end{equation}
        where
        \begin{equation*}
            L\triangleq\begin{pmatrix}
            M_{2l\times 2l} & P_{2l\times(2l-1)}\\
            O_{(l-1)\times 2l} & Q_{(l-1)\times (2l-1)}
            \end{pmatrix}
        \end{equation*}
        and 
        \begin{equation*}
            P\triangleq\begin{pmatrix}
                 -q^{k+1} &           &           &             &\\
              & -q^{k+2}  &           &             &\\
              &           & \ddots    &             &\\
              &           &           & -q^{k+2l-1} &\\
             0& 0         & \cdots    & 0              
            \end{pmatrix},\qquad Q\triangleq\begin{pmatrix}
            1        & \cdots & 0      & 0      & 0      & \cdots  & -1     \\
                     & \ddots & 0      & 0      & 0      & \iddots &        \\
                     &        & 1      & 0      & -1     &         &
            \end{pmatrix},
        \end{equation*} and $O$ is the all zero matrix.
        
        The invertibility of $M$, together with the displayed block form, gives $\rk L=3l-1$. Thus the solution set of Eq.~\eqref{eq_even} is an $l$-dimensional affine subspace. We parametrize its homogeneous part as follows. Define vectors in $\R^{4l-1}$ by
        \[
        \mathbf{y}_1\triangleq\begin{pmatrix}
            q^{k+1}\mathbf{v}_1 \\ 1 \\ 0 \\ \vdots \\ 0
        \end{pmatrix},\qquad \mathbf{y}_2\triangleq\begin{pmatrix}
            q^{k+2}\mathbf{v}_2 \\ 0 \\ 1 \\ \vdots \\ 0
        \end{pmatrix},\qquad\cdots,\qquad \mathbf{y}_{2l-1}\triangleq\begin{pmatrix}
            q^{k+2l-1}\mathbf{v}_{2l-1} \\ 0 \\ 0 \\ \vdots \\ 1
        \end{pmatrix}.
        \]
        A direct check gives
        \[
            L(\mathbf{y}_1+\mathbf{y}_{2l-1})=\cdots=L(\mathbf{y}_{l-1}+\mathbf{y}_{l+1})=L\mathbf{y}_l=0.
        \]
        Set
        \[
        \mathbf{w}_1\triangleq\mathbf{y}_1+\mathbf{y}_{2l-1},\qquad\cdots,\qquad \mathbf{w}_{l-1}\triangleq\mathbf{y}_{l-1}+\mathbf{y}_{l+1}, \qquad \mathbf{w}_l\triangleq\mathbf{y}_l.
        \]
        The last $2l-1$ coordinates show that the vectors in $W:=\{\mathbf{w}_i\mid1\le i\le l\}$ are linearly independent. Since the nullity is $l$, this set forms a basis of $\mathrm{Ker}\,L$.

        We next choose a particular solution of Eq.~\eqref{eq_even}. Write
        \[
        \yp=(a_{k+1},\dots,a_{k+2l},a_{k+1}',\dots,a_{k+2l-1}')\trans.
        \]
        The primed coordinates are assigned using Eq.~\eqref{eqB} and the symmetry $A_i'=A_{n-i}'$: set
        \[
        \begin{aligned}
            a_{k+1}'=a_{k+2l-1}'&=\frac{1}{q^{k+1}}\binom{n}{k+1},\\
            &\vdots\\
            a_{k+l-1}'=a_{k+l+1}'&=\frac{1}{q^{k+l-1}}\binom{n}{k+l-1},\\
            a_{k+l}'&=\frac{1}{q^{k+l}}\binom{n}{k+l}.
        \end{aligned}
        \]
        Inserting these values and $A_{k+2l}'=\frac{1}{q^k}\binom{n}{k}$ into Eq.~\eqref{eqM} gives $a_{k+1}=\dots=a_{k+l}=0$ and
        \begin{equation}\label{value_even}
            \begin{aligned}
                a_{k+l+1}&=(q^2-1)\binom{n}{k+l+1},\\
                a_{k+l+2}&=(q^4-1)\binom{n}{k+l+2}-(n-(k+l+1))(q^2-1)\binom{n}{k+l+1}.
            \end{aligned}
        \end{equation}
        The other entries $a_{k+l+i}$ are determined uniquely for $1\le i\le l$, but only $a_{k+l+1}$ and $a_{k+l+2}$ will be used below. Since $\yp$ satisfies Eq.~\eqref{eq_even}, every solution of Eq.~\eqref{eq_even} has the form
        \begin{equation}\label{sol_even}
        \mathbf{y}=t_1\mathbf{w}_1+\cdots+t_{l}\mathbf{w}_{l}+\yp, \qquad \text{ for some } t_1,\dots,t_l\in\R.
        \end{equation}
        Project $\mathbf{y}$ and $\mathbf{y}_{\mathrm{p}}$ onto their first $2l$ coordinates, which correspond to $A_{k+1},A_{k+2},\dots,A_{k+2l}$. Denote the projections by $\tilde{\mathbf{y}}$ and $\tilde{\mathbf{y}}_{\mathrm{p}}$. Then 
        \begin{equation}\label{eq:ytilde_even}
            \begin{aligned}
                \tilde{\mathbf{y}}&=t_1(q^{k+1}\mathbf{v}_1+q^{k+2l-1}\mathbf{v}_{2l-1})+\cdots+t_{l-1}(q^{k+l-1}\mathbf{v}_{l-1}+q^{k+l+1}\mathbf{v}_{l+1})+t_lq^{k+l}\mathbf{v}_l+\typ\\
                &=(\mathbf{v}_1,\cdots,\mathbf{v}_{2l})(t_1q^{k+1},\cdots,t_lq^{k+l},t_{l-1}q^{k+l+1},\cdots,t_1q^{k+2l-1},0)\trans+\typ\\
                &=M^{-1}\bm{\mu}+\typ,
            \end{aligned}
        \end{equation}
        where
        \[
        \bm{\mu}=(t_1q^{k+1},\cdots,t_lq^{k+l},t_{l-1}q^{k+l+1},\cdots,t_1q^{k+2l-1},0)\trans.
        \]

    \textbf{ Step 2:}    Now assume $n>2(l+1)q^2+2(l-1)$ and we deduce a contradiction. To rule out a defect-$l$ absolutely maximally entangled state in this range, it is enough to prove that for every $t_1,\dots,t_l\in\R$, the  vector $\mathbf{y}$ in Eq.~\eqref{sol_even} has a negative component. That is, some $A_i$ or $A_j'$ must be negative, which contradicts Lemma~\ref{lem:enumerator_constraints}. Suppose instead that some choice of $t_1,\dots,t_l$ makes all components of
        \[
        \mathbf{y}=t_1\mathbf{w}_1+\cdots+t_{l}\mathbf{w}_{l}+\yp
        \]
        nonnegative. Then $\tilde{\mathbf{y}}=M^{-1}\bm{\mu}+\typ$ is componentwise nonnegative as well.
        Assume that we have a vector
        \[
        \bm{\lambda}=(\lambda_1,\dots,\lambda_{2l})\trans\in\R_{\ge 0}^{2l}
        \]
        satisfying
        \begin{equation}\label{aim_even}
            \bm{\lambda}\trans M^{-1}\bm{\mu}=0.
        \end{equation} Then
  \[
        \bm{\lambda}\trans\tilde{\mathbf{y}}=\bm{\lambda}\trans(M^{-1}\bm{\mu}+\typ)=\bm{\lambda}\trans\typ\ge 0.
        \]
        In this way, we transfer the necessary condition $\mathbf{y}\ge 0$ for some $t_1,\dots,t_l\in\R$ to $\bm{\lambda}\trans\typ\geq 0$ for a specific $\bm{\lambda}=(\lambda_1,\dots,\lambda_{2l})\trans\in\R_{\ge 0}^{2l}$. If there exists such a $\bm{\lambda}$ satisfying Eq.~\eqref{aim_even} but $\bm{\lambda}\trans\typ<0$, then we have a contradiction. Next, we show such a  vector $\bm{\lambda}$ exists. 
      
        First, we show a  vector $\bm{\lambda}$ satisfying  Eq.~\eqref{aim_even} exists by presetting some of the entries. 

        \begin{claim}\label{claim_even}
            Set
            \begin{equation}\label{lambda_even}
                \lambda_{l+1}=l,\quad \lambda_{l+2}=\frac{l+1}{n-(k+l+1)},\quad \lambda_{l+3}=\dots=\lambda_{2l}=0.
            \end{equation}
            Then there exists $\lambda_1,\dots,\lambda_l\ge 0$ such that Eq.~\eqref{aim_even} holds.
        \end{claim}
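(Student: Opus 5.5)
Write $\mathbf{u}\trans\triangleq\bm{\lambda}\trans M^{-1}$, so that $\bm{\lambda}=M\trans\mathbf{u}$. Then $\lambda_i=\sum_{j\ge i}u_j\binom{n-k-i}{j-i}$, and Eq.~\eqref{aim_even} becomes the linear condition $\mathbf{u}\trans\bm{\mu}=0$. The plan is to work entirely with $\mathbf{u}$.

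\textbf{Step 1: translate Eq.~\eqref{aim_even} into conditions on $\mathbf{u}$.} By the shape of $\bm{\mu}$, $\mathbf{u}\trans\bm{\mu}=0$ for all $t_1,\dots,t_l$ is equivalent to
\[
u_l=0,\qquad q^{k+j}u_j+q^{k+2l-j}u_{2l-j}=0\quad (1\le j\le l-1),
\]
with $u_{2l}$ free. Since $M^{-1}$ is lower triangular, $u_j$ for $j\ge l+1$ involves only $\lambda_{l+1},\dots,\lambda_{2l}$. With the preset values \eqref{lambda_even} this gives
\[
u_{l+1}=\lambda_{l+1}-\lambda_{l+2}\bigl(n-(k+l+1)\bigr)=-1,\qquad u_{l+2}=\lambda_{l+2},\qquad u_j=0\ (j\ge l+3).
\]
The conditions above therefore force
\[
u_l=0,\quad u_{l-1}=q^2,\quad u_{l-2}=-q^4\lambda_{l+2},\quad u_j=0\ (j\le l-3),
\]
where the terms with index below $1$ are simply absent when $l$ is small. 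Since $\lambda_j$ enters $u_j$ with coefficient $1$, back substitution determines $\lambda_1,\dots,\lambda_l$ uniquely. Equivalently, they are read off from $\bm{\lambda}=M\trans\mathbf{u}$. As a consistency check, the $(l+1)$-st entry of $M\trans\mathbf{u}$ is $-1+(l+1)=l$, which matches \eqref{lambda_even}.

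\textbf{Step 2: write the $\lambda_i$ explicitly.} Put $c=\frac{l+1}{n-(k+l+1)}$, $m=l-i\ge0$ and $N=n-k-i$. Then
\[
\lambda_i=-c\,q^4\binom{N}{m-2}+q^2\binom{N}{m-1}-\binom{N}{m+1}+c\binom{N}{m+2}.
\]
For $i=l$ we have $N-1=n-(k+l+1)$, so $\lambda_l=-N+c\binom N2=\frac{(l-1)N}{2}\ge0$. This settles the case $i=l$ directly.

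\textbf{Step 3 (main obstacle): prove $\lambda_i\ge0$ for $1\le i\le l-1$.} This is where the hypothesis $n>2(l+1)q^2+2(l-1)$, i.e.\ $n-k=n/2+l>(l+1)q^2+2l-1$, must be used. The plan is to factor out $\binom{N}{m+1}$ and use the ratios
\[
\binom{N}{m+2}\Big/\binom{N}{m+1}=\frac{N-m-1}{m+2},\qquad \binom{N}{m-1}\Big/\binom{N}{m+1}=\frac{m(m+1)}{(N-m+1)(N-m)}.
\]
Since $N-m=n-k-l$ is independent of $i$, positivity reduces to
\[
c\,\frac{N-m-1}{m+2}-1+q^2\frac{m(m+1)}{(N-m+1)(N-m)}-c\,q^4\frac{(m-1)m(m+1)}{(N-m+2)(N-m+1)(N-m)}\ \ge 0 .
\]
The leading term $c\frac{N-m-1}{m+2}\approx\frac{l+1}{m+2}$ already dominates $1$ when $m<l-1$. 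The critical cases are $m$ close to $l$, where the $q^2$ and $q^4$ terms must be balanced against each other. My first attempt will be to pair the $q^2$ and $-cq^4$ terms into a single term of the form
\[
q^2\frac{m(m+1)}{(N-m+1)(N-m)}\Bigl(1-\frac{(l+1)q^2(m-1)}{(N-m-1)(N-m+2)}\Bigr),
\]
and to show the bracket is nonnegative using $N-m-1\ge(l+1)q^2$, which is exactly where the threshold enters. If this pairing leaves a slack that is too small in the extreme case $m=l-1$, I will instead treat that single case $i=1$ by a direct computation.
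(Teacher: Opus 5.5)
Your proposal is correct and produces exactly the paper's certificate. Your $\mathbf{u}$ is the paper's $\bm{\alpha}$, with the same nonzero entries $u_{l-2}=-cq^4$, $u_{l-1}=q^2$, $u_{l+1}=-1$, $u_{l+2}=c$. Your positivity argument (pairing the $q^2$ and $q^4$ terms and using $n-(k+l+1)\ge (l+1)q^2$) is the same estimate the paper uses to show the $q^4$ term is dominated by the $q^2$ term.

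The real difference is in how $\lambda_1,\dots,\lambda_l$ are recovered. The paper splits $M^{-1}$ into $l\times l$ blocks and computes $\bm{\beta}=(\lambda_{l+1},\dots,\lambda_{2l})M_2'$. It then multiplies $(\alpha_1-\beta_1,\dots,\alpha_l-\beta_l)$ by $M_1$, and needs the alternating binomial identities of Lemma~\ref{comeq} to collapse the resulting sums. You instead read off $\bm{\lambda}=M\trans\mathbf{u}$ directly, summing over all rows $j\ge i$, including $j=l+1,l+2$. The four-term formula then appears immediately. The simplification $-\binom{N}{m+1}+c\binom{N}{m+2}=\frac{l-m-1}{m+2}\binom{N}{m+1}$ needs only $c\,(N-m-1)=l+1$. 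Your route therefore removes the need for Lemma~\ref{comeq} altogether, which is a genuine streamlining of the verification.

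Two small corrections to your Step 3. First, $N-m-1=n-(k+l+1)$ for every $i$, so $c\frac{N-m-1}{m+2}=\frac{l+1}{m+2}$ is an exact identity, not an approximation; at $m=l-1$ that part vanishes identically. Second, no separate treatment of $i=1$ is needed. With $N-m-1\ge (l+1)q^2$, your bracket is at least $1-\frac{m-1}{N-m+2}=1-\frac{m-1}{k+l+2}>0$ for every $1\le m\le l-1$, so your first attempt already closes the argument.

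Your value $\lambda_l=\frac{(l-1)(n-k-l)}{2}$ is also correct. The paper's displayed $\lambda_l=q^2+(l-1)\binom{s-l}{2}$ is a slip, since its own general formula at $j=l$ gives $\frac{l-1}{2}(s-l)$. This does not affect the sign conclusion.
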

        
        Assuming Claim~\ref{claim_even}, the preceding construction gives
        \[
        \bm{\lambda}\trans\tilde{\mathbf{y}}=\bm{\lambda}\trans\typ=\sum_{i=1}^{2l}\lambda_{i}a_{k+i}=la_{k+l+1}+\frac{l+1}{n-(k+l+1)}a_{k+l+2}\ge 0.
        \]
        Using Eq.~\eqref{value_even}, this becomes
        \[
        l(q^2-1)\binom{n}{k+l+1}+\frac{l+1}{n-(k+l+1)}\left((q^4-1)\binom{n}{k+l+2}-(n-(k+l+1))(q^2-1)\binom{n}{k+l+1}\right)\ge 0.
        \]
        Cancelling the common terms gives
        \[
        \frac{l+1}{n-(k+l+1)}(q^4-1)\binom{n}{k+l+2}\ge(q^2-1)\binom{n}{k+l+1}.
        \]
        Using the identity $\binom{n}{k+l+2}=\frac{n-(k+l+1)}{k+l+2}\binom{n}{k+l+1}$, we obtain
        \[\frac{l+1}{k+l+2}(q^2+1)\ge1, \qquad\text{i.e.}\qquad k\le(l+1)q^2-1.\]
        Consequently $n=2k+2l\le2(l+1)q^2+2(l-1)$, contradicting the assumed lower bound. It remains to prove Claim~\ref{claim_even}.

        We first record a binomial identity used in the verification of the claim, whose proof is given in the Appendix.

        \begin{lemma}\label{comeq}
            For $1\le j \le l$ and $s\ge l+2$, we have
            \begin{enumerate}[label=(\roman*)]
                \item $\displaystyle \sum_{i=j}^l(-1)^{l-i}\binom{s-i}{l+1-i}\binom{s-j}{i-j}=\binom{s-j}{l-j+1}$.
                \item $\displaystyle \sum_{i=j}^l(-1)^{l-i}\binom{s-i}{l+2-i}\binom{s-j}{i-j}=(l-j+1)\binom{s-j}{l-j+2}$.
            \end{enumerate}
        \end{lemma}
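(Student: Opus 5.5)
The plan is to reduce both identities to the alternating sum of a row of Pascal's triangle. The tool is the trinomial revision identity $\binom{a}{m}\binom{a-m}{c-m}=\binom{a}{c}\binom{c}{m}$, valid for nonnegative integers $a$ and $0\le m\le c$ under the paper's convention for binomials.

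\textbf{Reindexing.} First substitute $m=i-j$, $a=s-j$ and $r=l-j$. Since $1\le j\le l$ and $s\ge l+2$, we have $r\ge 0$ and $a\ge r+2$, so all binomial arguments are nonnegative integers. Then $\binom{s-i}{l+1-i}=\binom{a-m}{r+1-m}$, $\binom{s-i}{l+2-i}=\binom{a-m}{r+2-m}$ and $\binom{s-j}{i-j}=\binom{a}{m}$. The sums run over $0\le m\le r$ with sign $(-1)^{r-m}$.

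\textbf{Part (i).} Apply revision with $c=r+1$. The left-hand side becomes
\[
\binom{a}{r+1}\sum_{m=0}^{r}(-1)^{r-m}\binom{r+1}{m}.
\]
The full alternating sum $\sum_{m=0}^{r+1}(-1)^{r+1-m}\binom{r+1}{m}$ equals $(1-1)^{r+1}=0$, since $r+1\ge1$. The omitted term $m=r+1$ contributes $1$, so the truncated sum with sign $(-1)^{r+1-m}$ equals $-1$. Flipping the sign convention, the truncated sum with sign $(-1)^{r-m}$ is therefore $1$. This gives $\binom{a}{r+1}=\binom{s-j}{l-j+1}$.

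\textbf{Part (ii).} Apply revision with $c=r+2$. The left-hand side becomes
\[
\binom{a}{r+2}\sum_{m=0}^{r}(-1)^{r-m}\binom{r+2}{m}.
\]
The full sum $\sum_{m=0}^{r+2}(-1)^{r-m}\binom{r+2}{m}$ vanishes. The two omitted terms are $m=r+1$, contributing $-(r+2)$, and $m=r+2$, contributing $+1$. Hence the truncated sum equals $(r+2)-1=r+1$. The left-hand side is therefore $(r+1)\binom{a}{r+2}=(l-j+1)\binom{s-j}{l-j+2}$.

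No step is a real obstacle. The only points needing care are the sign bookkeeping when removing the boundary terms, and checking that the revision identity applies with the paper's convention. For the latter, note that $a-m\ge 0$ throughout because $a\ge r+2>m$, which is exactly where the hypothesis $s\ge l+2$ is used.
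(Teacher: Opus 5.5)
Your proposal is correct and follows essentially the same route as the paper. The paper applies the same factorial rearrangement $\binom{s-i}{l+1-i}\binom{s-j}{i-j}=\binom{s-j}{l-j+1}\binom{l-j+1}{l-i+1}$ (and its $l+2$ analogue) to factor out the target binomial, then evaluates the truncated alternating Pascal row by removing one (resp.\ two) boundary terms from $(1-1)^{N}=0$. The only difference is the reindexing ($m=i-j$ versus the paper's $k=l-i+1$ or $k=l-i+2$), which is immaterial by the symmetry $\binom{N}{m}=\binom{N}{N-m}$.
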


        We now verify Claim~\ref{claim_even}.

        \begin{proof}[Proof of Claim~\ref{claim_even}]
            We solve Eq.~\eqref{aim_even} with the prescribed tail values
            \[
            \lambda_{l+1}=l,\quad \lambda_{l+2}=\frac{l+1}{n-(k+l+1)},\quad \lambda_{l+3}=\dots=\lambda_{2l}=0,
            \]
            and then check that the resulting $\lambda_1,\dots,\lambda_l$ are nonnegative. We prove this in three steps.

            First, we solve $\bm{\lambda}\trans M^{-1}$ from $\bm{\lambda}\trans M^{-1}\bm{\mu}=0.$
            Define
            \[
            \bm{\lambda}\trans M^{-1}:=\bm{\alpha}\trans=(\alpha_1,\dots,\alpha_{2l}).
            \]
            Hence
            \[
            \bm{\lambda}\trans M^{-1}\bm{\mu}=\bm{\alpha}\trans\bm{\mu}=q^{k+1}t_1(\alpha_1+q^{2l-2}\alpha_{2l-1})+\cdots+q^{k+l-1}t_{l-1}(\alpha_{l-1}+q^2\alpha_{l+1})+q^{k+l}t_l\alpha_l.
            \]
            Since the parameters $t_1,\dots,t_l$ are arbitrary, the condition $\bm{\lambda}\trans M^{-1}\bm{\mu}=0$ is equivalent to
            \begin{equation}\label{eqsystem_even}
                \left\{
                \begin{aligned}
                    \alpha_{1}+q^{2l-2}\alpha_{2l-1} &= 0,\\
                    &\vdots\\
                    \alpha_{l-2}+q^4\alpha_{l+2} &= 0,\\
                    \alpha_{l-1}+q^2\alpha_{l+1} &= 0,\\
                    \alpha_l &= 0.
                \end{aligned}
                \right.
            \end{equation}
            To compute the $\alpha_i$, use the block decomposition
            \[
            M=\begin{pmatrix}
                M_1 & O\\
                M_2 & M_3
            \end{pmatrix},\quad
            M^{-1}=\begin{pmatrix}
                M_1' & O\\
                M_2' & M_3'
            \end{pmatrix},
            \]
            where each small block is of size $l\times l$. Put $s=n-k$. Then
            \begin{align}
                M_1'&=M_1^{-1},\label{eqm1_even}\\
                M_2'&=\begin{pmatrix}
                    (-1)^{l}\binom{s-1}{l} & (-1)^{l-1}\binom{s-2}{l-1} & \cdots & (-1)\binom{s-l}{1} \\
                    (-1)^{l+1}\binom{s-1}{l+1} & (-1)^{l}\binom{s-2}{l} & \cdots & (-1)^2\binom{s-l}{2} \\
                    \multicolumn{4}{c}{\vphantom{\begin{matrix}a_{11}\\a_{21}\end{matrix}} \ast}
                \end{pmatrix},\label{eqm2_even}\\
                M_3'&=\begin{pmatrix}
                    \binom{s-(l+1)}{0} & 0 & \cdots & 0 \\
                    -\binom{s-(l+1)}{1} & \binom{s-(l+2)}{0} & \cdots & 0 \\
                    \multicolumn{4}{c}{\vphantom{\begin{matrix}a_{11}\\a_{21}\end{matrix}} \ast}
                \end{pmatrix}.\label{eqm3_even}
            \end{align}
            With this decomposition, $\bm{\alpha}\trans=\bm{\lambda}\trans M^{-1}$ reads
            \[
            ((\alpha_1,\dots,\alpha_l),(\alpha_{l+1},\dots,\alpha_{2l}))=((\lambda_1,\dots,\lambda_l),(\lambda_{l+1},\cdots,\lambda_{2l}))\begin{pmatrix}
                M_1' & O\\
                M_2' & M_3'
            \end{pmatrix},
            \]
            Hence
            \begin{align}
                (\alpha_1,\dots,\alpha_l)&=(\lambda_1,\dots,\lambda_l)M_1'+(\lambda_{l+1},\cdots,\lambda_{2l})M_2',\label{eqa1_even}\\
                (\alpha_{l+1},\dots,\alpha_{2l})&=(\lambda_{l+1},\cdots,\lambda_{2l})M_3'.\label{eqa2_even}
            \end{align}
            Substitution of Eq.~\eqref{lambda_even} and Eq.~\eqref{eqm3_even} into Eq.~\eqref{eqa2_even} gives
            \begin{equation}\label{alpha2_even}
                \alpha_{l+1}=-1,\quad \alpha_{l+2}=\frac{l+1}{s-(l+1)},\quad \alpha_{l+3}=\dots=\alpha_{2l}=0.
            \end{equation}
            Then Eq.~\eqref{eqsystem_even} gives
            \begin{equation}\label{alpha1_even}
                \alpha_1=\dots=\alpha_{l-3}=0,\quad \alpha_{l-2}=-\frac{l+1}{s-(l+1)}q^4,\quad\alpha_{l-1}=q^2,\quad\alpha_l=0.
            \end{equation}

            Second, we present explicit formulas for $\lambda_1,\ldots, \lambda_l$.
            Set
            \[
            (\lambda_{l+1},\cdots,\lambda_{2l})M_2':=(\beta_1,\dots,\beta_l).
            \]
            Then
            \begin{equation}\label{beta_even}
                \begin{aligned}
                    \beta_i&=\lambda_{l+1}(M_2')_{1,i}+\lambda_{l+2}(M_2')_{2,i}\\
                    &=l(-1)^{l+1-i}\binom{s-i}{l+1-i}+\frac{l+1}{s-(l+1)}(-1)^{l-i}\binom{s-i}{l+2-i}.
                \end{aligned}
            \end{equation}
            Rewrite Eq.~\eqref{eqa1_even} as
            \begin{equation*}
                (\lambda_1,\dots,\lambda_l)M_1'=(\alpha_1,\dots,\alpha_l)-(\lambda_{l+1},\cdots,\lambda_{2l})M_2'=(\alpha_1-\beta_1,\cdots,\alpha_l-\beta_l).
            \end{equation*}
            Using Eq.~\eqref{eqm1_even}, this is equivalent to
            \begin{equation}\label{eqa1'_even}
                (\lambda_1,\dots,\lambda_l)=(\alpha_1-\beta_1,\cdots,\alpha_l-\beta_l)M_1.
            \end{equation}
            Using Eqs.~\eqref{alpha1_even}, \eqref{beta_even}, and \eqref{eqa1'_even}, we obtain for each $j\le l$
            \[
            \begin{aligned}
                \lambda_j&=\alpha_{l-2}M_{l-2,j}+\alpha_{l-1}M_{l-1,j}-\sum_{i=j}^{l}\beta_iM_{i,j}\\
                &=-\frac{l+1}{s-(l+1)}q^4\binom{s-j}{l-2-j}+q^2\binom{s-j}{l-1-j}+\sum_{i=j}^ll(-1)^{l-i}\binom{s-i}{l+1-i}\binom{s-j}{i-j}\\
                &\phantom{=}\,\,-\sum_{i=j}^l\frac{l+1}{s-(l+1)}(-1)^{l-i}\binom{s-i}{l+2-i}\binom{s-j}{i-j}\\
                &\xlongequal{\text{Lemma~\ref{comeq}}}-\frac{l+1}{s-(l+1)}q^4\binom{s-j}{l-j-2}+q^2\binom{s-j}{l-j-1}+l\binom{s-j}{l-j+1}-\frac{l+1}{s-(l+1)}(l-j+1)\binom{s-j}{l-j+2}\\
                &=-\frac{l+1}{s-(l+1)}q^4\binom{s-j}{l-j-2}+q^2\binom{s-j}{l-j-1}+(l-\frac{l+1}{s-(l+1)}(l-j+1)\frac{s-l-1}{l-j+2})\binom{s-j}{l-j+1}\\
                &=-\frac{l+1}{s-(l+1)}q^4\binom{s-j}{l-j-2}+q^2\binom{s-j}{l-j-1}+\frac{j-1}{l-j+2}\binom{s-j}{l-j+1}.
            \end{aligned}
            \]
            
           Finally, we check that all these coefficients are nonnegative.
            For $j=l$, $\lambda_l=q^2+(l-1)\binom{s-l}{2}>0$. Now assume $1\le j<l$. Since $n=2k+2l>2(l+1)q^2+2(l-1)$, we have $k>(l+1)q^2-1$. Denote
            \[
            m:=l-j-1\ (0\le m\le l-2), \quad a:=s-l=k+l>(l+1)q^2.
            \]
            Using the elementary ratios
            \[
            \begin{aligned}
                \binom{s-j}{l-j-2}&=\binom{s-j}{l-j-1}\frac{m}{a+2},\\
                \binom{s-j}{l-j+1}&=\binom{s-j}{l-j-1}\frac{(a+1)a}{(m+1)(m+2)},
            \end{aligned}
            \]
            we obtain
            \[
            \lambda_j=\binom{s-j}{l-j-1}(-q^4\frac{l+1}{a-1}\frac{m}{a+2}+q^2+\frac{j-1}{m+3}\frac{(a+1)a}{(m+1)(m+2)}),
            \]
            where
            \[
            0\le \frac{q^4(l+1)m}{(a-1)(a+2)}\le\frac{q^4(l+1)(l-2)}{(l+1)^2q^4}=\frac{l-2}{l+1}<1.
            \]
            Therefore
            \[
            \lambda_j\ge\binom{s-j}{l-j-1}(-1+q^2+0)\ge 0,
            \]
            which proves the claim.
        \end{proof}

        \subsection{\texorpdfstring{The Case of $n$ Odd}{The Case of n Odd}}
        Now let $n=2k+2l+1$. In this case, all the $4l$ variables in Eq.~\eqref{eqM}
        \[
        A_{k+1},A_{k+2},\dots,A_{k+2l},A_{k+1}',A_{k+2}',\dots,A_{k+2l}'
        \] are unknown. The proof follows the same strategy as the even case, but the parity changes the symmetry constraints. To make the comparison transparent, we keep the notation of the even case, with the necessary modifications.
        
        \textbf{Step 1:} First, 
        Eq.~\eqref{eqM} becomes
        \begin{equation}\label{eq_odd}
            L_{3l\times4l}\begin{pmatrix}
                A_{k+1} \\ \vdots \\ A_{k+2l} \\ A_{k+1}' \\ \vdots \\ A_{k+2l}'
            \end{pmatrix}=\begin{pmatrix}
                -\binom{n}{k+1} \\ \vdots \\ -\binom{n}{k+2l} \\ O_{l\times 1}
            \end{pmatrix},
        \end{equation}
        where
        \begin{equation*}
            L=\begin{pmatrix}
            M_{2l\times 2l} & P_{2l\times2l}\\
            O_{l\times 2l} & Q_{l\times 2l}
            \end{pmatrix}
        \end{equation*}
        with
        \begin{equation*}
            P=\begin{pmatrix}
                 -q^{k+1} &           &           &             &\\
              & -q^{k+2}  &           &             &\\
              &           & \ddots    &             &\\
              &           &           & -q^{k+2l} &\\           
            \end{pmatrix},\qquad Q=\begin{pmatrix}
            1        & \cdots & 0            & 0      & \cdots  & -1     \\
                     & \ddots & 0            & 0      & \iddots &        \\
                     &        & 1            & -1     &         &
            \end{pmatrix}.
        \end{equation*}

        In this case, $\rk L=3l$. Thus the solution set of Eq.~\eqref{eq_odd} is still an $l$-dimensional affine subspace in $\R^{4l}$. We  define $2l$ similar vectors in $\R^{4l}$:
        \[
        \mathbf{y}_1=\begin{pmatrix}
            q^{k+1}\mathbf{v}_1 \\ 1 \\ 0 \\ \vdots \\ 0
        \end{pmatrix},\qquad \mathbf{y}_2=\begin{pmatrix}
            q^{k+2}\mathbf{v}_2 \\ 0 \\ 1 \\ \vdots \\ 0
        \end{pmatrix},\qquad\cdots,\qquad \mathbf{y}_{2l}=\begin{pmatrix}
            q^{k+2l}\mathbf{v}_{2l} \\ 0 \\ 0 \\ \vdots \\ 1
        \end{pmatrix},
        \]
        which satisfy 
        $
            L(\mathbf{y}_1+\mathbf{y}_{2l})=L(\mathbf{y}_2+\mathbf{y}_{2l-1})=\cdots=L(\mathbf{y}_{l}+\mathbf{y}_{l+1})=0.
       $ This gives $l$  solutions to the homogeneous part of Eq. \eqref{eq_odd}:
        \[
        \mathbf{w}_1=\mathbf{y}_1+\mathbf{y}_{2l},\qquad\mathbf{w}_2=\mathbf{y}_2+\mathbf{y}_{2l-1},\qquad\cdots,\qquad \mathbf{w}_{l}=\mathbf{y}_{l}+\mathbf{y}_{l+1},
        \] which are also 
   linearly independent, and form a basis of $\mathrm{Ker}\,L$.

        The particular solution $\yp=(a_{k+1},\dots,a_{k+2l},a_{k+1}',\dots,a_{k+2l}')\trans$ to Eq.~\eqref{eq_odd} is chosen as before, which gives   
         \[
         \begin{aligned}
            a_{k+1}'=a_{k+2l}'&=\frac{1}{q^{k+1}}\binom{n}{k+1},\\
             &\vdots\\
            a_{k+l}'=a_{k+l+1}'&=\frac{1}{q^{k+l}}\binom{n}{k+l},
         \end{aligned}
         \]
         and
        \begin{equation}\label{value_odd}
            \begin{aligned}
                a_{k+l+1}&=(q-1)\binom{n}{k+l+1},\\
                a_{k+l+2}&=(q^3-1)\binom{n}{k+l+2}-(n-(k+l+1))(q-1)\binom{n}{k+l+1}.
            \end{aligned}
        \end{equation}
        Finally, each solution to 
        Eq.~\eqref{eq_odd} has the form
        \begin{equation}\label{sol_odd}
        \mathbf{y}=t_1\mathbf{w}_1+\cdots+t_{l}\mathbf{w}_{l}+\yp,\qquad t_1,\dots,t_l\in\R,
         \end{equation} and its projection to the first $2l$ coordinates satisfies
        \begin{equation*}
            \tilde{\mathbf{y}}=M^{-1}\bm{\mu}+\typ
        \end{equation*}
        with
        \[
        \bm{\mu}=(t_1q^{k+1},\cdots,t_lq^{k+l},t_{l}q^{k+l+1},\cdots,t_1q^{k+2l})\trans.
        \]

       \textbf{Step 2:} Now assume
        \[
        n>2(l+1)q^2+2(l+1)q+(2l-1).
        \]
        As in the even case, suppose for contradiction that some choice of $t_1,\dots,t_l$ makes all components of $\mathbf y$ in Eq.~\eqref{sol_odd} nonnegative. Then its projection
        \[
        \tilde{\mathbf y}=M^{-1}\bm{\mu}+\typ
        \]
        is also componentwise nonnegative. We will construct a vector
        \[
        \bm{\lambda}=(\lambda_1,\dots,\lambda_{2l})\trans\in\R_{\ge0}^{2l}
        \]
        satisfying
        \begin{equation}\label{aim_odd}
            \bm{\lambda}\trans M^{-1}\bm{\mu}=0.
        \end{equation}
        For such a vector, the assumed nonnegativity of $\tilde{\mathbf y}$ implies
        \[
        \bm{\lambda}\trans\tilde{\mathbf y}
        =
        \bm{\lambda}\trans(M^{-1}\bm{\mu}+\typ)
        =
        \bm{\lambda}\trans\typ
        \ge0.
        \]
        We will show that this necessary inequality contradicts the assumed lower bound on $n$.
       

        \begin{claim}\label{claim_odd}
            Set
            \begin{equation}\label{lambda_odd}
                \lambda_{l+1}=l,\quad \lambda_{l+2}=\frac{l+1}{n-(k+l+1)},\quad \lambda_{l+3}=\dots=\lambda_{2l}=0.
            \end{equation}
            Then there exists $\lambda_1,\dots,\lambda_l\ge 0$ such that Eq.~\eqref{aim_odd} holds.
        \end{claim}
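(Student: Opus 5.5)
We mirror the proof of Claim~\ref{claim_even}, adapting it to the odd-case vector $\bm{\mu}=(t_1q^{k+1},\dots,t_lq^{k+l},t_lq^{k+l+1},\dots,t_1q^{k+2l})\trans$; here $s:=n-k=k+2l+1$. Put $\bm{\alpha}\trans=\bm{\lambda}\trans M^{-1}$.

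\textbf{Step 1: reduce Eq.~\eqref{aim_odd} to conditions on $\bm{\alpha}$.} Since $t_1,\dots,t_l$ are arbitrary, Eq.~\eqref{aim_odd} is equivalent to
\[
\alpha_i+q^{2l+1-2i}\alpha_{2l+1-i}=0,\qquad 1\le i\le l.
\]
Every index is now paired; unlike the even case, there is no lone condition $\alpha_l=0$.

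\textbf{Step 2: compute $\bm{\alpha}$.} We use the same block decomposition of $M^{-1}$, namely Eqs.~\eqref{eqm1_even}--\eqref{eqm3_even} with the new $s$. From the prescribed tail \eqref{lambda_odd} and the first two rows of $M_3'$ we get
\[
\alpha_{l+1}=l-\tfrac{l+1}{s-l-1}(s-l-1)=-1,\qquad \alpha_{l+2}=\tfrac{l+1}{s-l-1},\qquad \alpha_{l+3}=\dots=\alpha_{2l}=0.
\]
The pairing conditions then force
\[
\alpha_l=q,\qquad \alpha_{l-1}=-\tfrac{l+1}{s-l-1}q^3,\qquad \alpha_1=\dots=\alpha_{l-2}=0.
\]

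\textbf{Step 3: closed form for $\lambda_1,\dots,\lambda_l$.} The $\beta_i$ are defined exactly as in \eqref{beta_even}. As before, $(\lambda_1,\dots,\lambda_l)=(\alpha_1-\beta_1,\dots,\alpha_l-\beta_l)M_1$. The $\beta$-contribution is identical to the even case, and Lemma~\ref{comeq} applies since $s\ge l+2$. Hence, for $1\le j\le l$,
\[
\lambda_j=-\frac{(l+1)q^3}{s-l-1}\binom{s-j}{l-j-1}+q\binom{s-j}{l-j}+\frac{j-1}{l-j+2}\binom{s-j}{l-j+1}.
\]

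\textbf{Step 4: nonnegativity.} For $j=l$ we get $\lambda_l=q+\frac{l-1}{2}(s-l)>0$.

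For $j<l$, set $m=l-j-1\in[0,l-2]$ and $a=s-l=k+l+1$. The ratio
\[
\binom{s-j}{l-j-1}=\binom{s-j}{l-j}\frac{m+1}{a+1}
\]
turns the expression into
\[
\lambda_j=\binom{s-j}{l-j}\left(q-\frac{(l+1)q^3(m+1)}{(a-1)(a+1)}+(\text{nonnegative term})\right).
\]
The hypothesis $n>2(l+1)q^2+2(l+1)q+(2l-1)$ gives $k>(l+1)q^2+(l+1)q-1$. Hence $a-1>(l+1)q^2$ and $a+1>(l+1)q^2$, so
\[
\frac{(l+1)q^2(m+1)}{(a-1)(a+1)}\le\frac{l-1}{(l+1)q^2}<1.
\]
The middle term is therefore smaller than $q$ in absolute value, which gives $\lambda_j\ge0$.

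The main obstacle is bookkeeping rather than any new idea. Two things must be checked: the shifted pairing places the nonzero $\alpha$'s at indices $l-1$ and $l$ (rather than $l-2$ and $l-1$), and the resulting powers $q^3$ and $q$ together with the shifted binomial ratio still close the inequality under the odd-case threshold.
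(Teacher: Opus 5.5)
Your proposal is correct and follows the paper's appendix proof essentially step by step. It uses the same pairing conditions $\alpha_i+q^{2l+1-2i}\alpha_{2l+1-i}=0$, the same values $\alpha_{l-1}=-\frac{(l+1)q^3}{s-l-1}$ and $\alpha_l=q$, the same closed form for $\lambda_j$ via Lemma~\ref{comeq}, and the same binomial-ratio estimate with $a=k+l+1$. The differences are cosmetic: you index with $m=l-j-1$ and treat $j=l$ separately, and you bound the negative term by $q$ times a quantity below $1$ (giving strict positivity) rather than by $1$ as the paper does.
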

       The proof of Claim~\ref{claim_odd} is similar to that of Claim~\ref{claim_even}, so we move it to the Appendix.
        Assuming Claim~\ref{claim_odd}, the preceding construction gives
        \[
        \bm{\lambda}\trans\tilde{\mathbf{y}}=\bm{\lambda}\trans\typ=\sum_{i=1}^{2l}\lambda_{i}a_{k+i}=la_{k+l+1}+\frac{l+1}{n-(k+l+1)}a_{k+l+2}\ge 0.
        \]
        Using Eq.~\eqref{value_odd}, this becomes
        \[
        l(q-1)\binom{n}{k+l+1}+\frac{l+1}{n-(k+l+1)}((q^3-1)\binom{n}{k+l+2}-(n-(k+l+1))(q-1)\binom{n}{k+l+1})\ge 0.
        \]
        After cancelling the common terms, we get
        \[
        \frac{l+1}{n-(k+l+1)}(q^3-1)\binom{n}{k+l+2}\ge(q-1)\binom{n}{k+l+1}.
        \]
        The identity $\binom{n}{k+l+2}=\frac{n-(k+l+1)}{k+l+2}\binom{n}{k+l+1}$ then gives
        \[\frac{l+1}{k+l+2}(q^2+q+1)\ge1, \qquad\text{i.e.}\qquad k\le(l+1)q^2+(l+1)q-1.\]
        Consequently $n=2k+2l+1\le2(l+1)q^2+2(l+1)q+(2l-1)$, contradicting the assumed lower bound. 

\section{\texorpdfstring{Asymptotic Upper Bound of $k$-Uniform States}{Asymptotic Upper Bound of k-Uniform States}}\label{sec:asymptotic}

    In this section, we apply Theorem~\ref{thmmain} to derive an explicit asymptotic upper bound on the ratio $k/n$ for $k$-uniform states in $\cqn$ when the local dimension $q$ is fixed.

    \begin{thm}\label{thm:asymptotic_kn_bound}
    Fix an integer $q\ge 2$. Suppose that there exists a $k$-uniform state in $(\C^q)^{\otimes n}$. Then
    \[
    \frac{k}{n}\le\begin{cases}
        \frac{q^2}{2(q^2+1)}+O(\frac{1}{n}), & n \text{ even},\\
        \frac{q^2+q}{2(q^2+q+1)}+O(\frac{1}{n}), & n \text{ odd}.
    \end{cases}
    \]
    In particular, one has the parity-free asymptotic bound
    \begin{equation}\label{eq:kn_unified}
    \limsup_{n\to\infty}\frac{k}{n}\le \frac{q^{2}+q}{2(q^{2}+q+1)}
    = \frac{1}{2}-\frac{1}{2(q^{2}+q+1)}.
    \end{equation}
    \end{thm}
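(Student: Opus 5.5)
The plan is to apply Theorem~\ref{thmmain} with the defect chosen as a function of $(n,k)$, and then solve the resulting inequality for $k$. Suppose a $k$-uniform state exists in $(\C^q)^{\otimes n}$. We may assume $k\ge 1$, and we know $k\le\lfloor n/2\rfloor$. Set $l:=\lfloor n/2\rfloor-k\ge 0$, so the state is an AME state of defect $l$. Theorem~\ref{thmmain} is a nonexistence statement; its contrapositive gives
\[
n\le 2(l+1)q^2+2(l-1)\quad (n \text{ even}),\qquad n\le 2(l+1)q^2+2(l+1)q+(2l-1)\quad (n\text{ odd}).
\]

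In the even case, $n=2m$ and $l=m-k$. The inequality reads $2m\le 2(m-k+1)q^2+2(m-k-1)$. Rearranging gives $k(q^2+1)\le m q^2+q^2-1$, so
\[
k\le \frac{q^2}{q^2+1}\cdot\frac n2+\frac{q^2-1}{q^2+1}.
\]
Dividing by $n$ yields $k/n\le \frac{q^2}{2(q^2+1)}+O(1/n)$, with an explicit error term $\frac{q^2-1}{(q^2+1)n}$.

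In the odd case, $n=2m+1$ and $l=m-k$. The inequality becomes $2m+1\le 2(m-k+1)(q^2+q)+2(m-k)-1$. This simplifies to
\[
2k(q^2+q+1)\le 2m(q^2+q)+2(q^2+q)-2,
\]
that is, $k\le \frac{(q^2+q)m+q^2+q-1}{q^2+q+1}$. Using $m=(n-1)/2$, this gives $k/n\le \frac{q^2+q}{2(q^2+q+1)}+O(1/n)$.

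For the parity-free bound~\eqref{eq:kn_unified}, compare the two constants. The function $x\mapsto \frac{x}{2(x+1)}$ is increasing, and $q^2<q^2+q$, so the even-case constant is smaller than the odd-case constant. Hence both subsequences have $\limsup$ at most $\frac{q^2+q}{2(q^2+q+1)}$. Writing this constant as $\frac12-\frac{1}{2(q^2+q+1)}$ is a one-line identity.

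There is no real obstacle. The only points needing care are:
\begin{itemize}
\item checking that the choice $l=\lfloor n/2\rfloor-k$ is legitimate for every $l\ge 0$, including $l=0,1,2$, which Theorem~\ref{thmmain} covers;
\item keeping the parity bookkeeping right so that the $O(1/n)$ constants depend only on $q$.
\end{itemize}
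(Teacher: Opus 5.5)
Your proposal is correct and follows the paper's proof: you apply Theorem~\ref{thmmain} contrapositively with $l=\lfloor n/2\rfloor-k$ and solve the resulting linear inequality for $k$, which gives the same additive constant $\frac{q^2-1}{q^2+1}$ in the even case. You also write out the odd case explicitly and justify the comparison of the two constants by the monotonicity of $x\mapsto x/(2(x+1))$; the paper only calls the odd case ``similar'' and states the comparison without argument.
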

    
    \begin{proof}
        Take the case of $n$ even as an example. Regard a $k$-uniform state as an AME state of defect $l$, where $l= n/2-k$. Then by Theorem~\ref{thmmain}, we have
        \[
        n\le 2(l+1)q^2+2(l-1), 
        \]
        i.e.
        \[
        n/2-k=l\ge\frac{n+4}{2(q^2+1)}-1.
        \]
        Therefore,
        \[
        \begin{aligned}
            k&\le n/2 -\frac{n+4}{2(q^2+1)}+1\\
            &\le \left(\frac{1}{2}-\frac{1}{2(q^2+1)}\right)n+\frac{q^2-1}{q^2+1}.
        \end{aligned}
        \]
        The case of $n$ odd is similar. Note that $\frac{q^2}{2(q^2+1)}\le\frac{q^2+q}{2(q^2+q+1)}$, hence we take the weaker one for the parity-free asymptotic bound:
        \[
        \limsup_{n\to\infty}\frac{k}{n}\le \frac{q^{2}+q}{2(q^{2}+q+1)} = \frac{1}{2}-\frac{1}{2(q^{2}+q+1)}.
        \]
    \end{proof}
    
    We next compare Eq.~\eqref{eq:kn_unified} with the asymptotic criterion of Ning et al.~\cite{NingShiLuoZhang2025}. Define, for $0\le \theta<1/2$,
    \begin{equation}\label{eq:ning_rate_function}
    \begin{aligned}
    \Phi_q(\theta)&\triangleq\frac{1}{2}\log(1-2\theta)
    +(1-\theta)\log\left(\frac{1-\theta}{1-2\theta}\right)
    +\theta\log 2-(1-2\theta)\log q.
    \end{aligned}
    \end{equation}
    Theorem~IV.5 of Ning et al.\ states that if $\Phi_q(\theta)>0$, then every $k$-uniform state in
    $(\C^q)^{\otimes n}$ satisfies $k\le \theta n$ for all sufficiently large $n$~\cite{NingShiLuoZhang2025}.
    Thus the strongest asymptotic constant obtainable from this criterion is
    \[
    \frac{k}{n}\leq \theta_{\mathrm N}(q):=\inf\{\theta\in[0,1/2):\Phi_q(\theta)>0\}.
    \]

      Next, we show that the parity-free asymptotic bound in Eq.~\eqref{eq:kn_unified} is strictly stronger than the
    bound obtained from the asymptotic criterion bound $\theta_{\mathrm N}(q)$ for every $q\ge4$.
    \begin{proposition}\label{prop:strict_improvement_ning}
   For every integer $q\ge2$,
    \[
    \frac{q^2}{2(q^2+1)}<\theta_{\mathrm N}(q).
    \]
   Moreover, for every integer $q\ge4$, 
    \[
    \frac{q^2+q}{2(q^2+q+1)}<\theta_{\mathrm N}(q).
    \]

    \end{proposition}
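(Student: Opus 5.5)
The plan is to reduce the claim to the sign of $\Phi_q$ at a single point. To do that, I first pin down the shape of $\Phi_q$ on $[0,1/2)$.

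\textbf{Step 1: shape of $\Phi_q$.} Differentiating \eqref{eq:ning_rate_function} term by term, the rational parts cancel:
\[
-\frac{1}{1-2\theta}-1+\frac{2(1-\theta)}{1-2\theta}=0 .
\]
What remains is
\[
\Phi_q'(\theta)=\log\frac{2q^2(1-2\theta)}{1-\theta}.
\]
This is positive for $\theta<\theta^*:=\frac{2q^2-1}{4q^2-1}$ and negative for $\theta\in(\theta^*,1/2)$. For the endpoint values:
\begin{itemize}
\item $\Phi_q(0)=-\log q<0$.
\item As $\theta\to1/2^-$, the singular terms combine to $(\theta-\tfrac12)\log(1-2\theta)+(1-\theta)\log(1-\theta)\to -\tfrac12\log 2$. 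Together with $\theta\log2\to\tfrac12\log 2$ and $(1-2\theta)\log q\to0$, this gives $\Phi_q(\theta)\to 0$.
\end{itemize}
Since $\Phi_q$ tends to $0$ at $1/2$ while decreasing on $(\theta^*,1/2)$, it is positive there. Hence $\{\theta:\Phi_q(\theta)>0\}=(\theta_{\mathrm N}(q),1/2)$, where $\theta_{\mathrm N}(q)\in(0,\theta^*)$ is the unique zero of $\Phi_q$. In particular, for any $\theta_0\in[0,1/2)$, the inequality $\Phi_q(\theta_0)<0$ implies $\theta_0<\theta_{\mathrm N}(q)$. So both assertions of the proposition reduce to checking the sign of $\Phi_q$ at one point.

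\textbf{Step 2: evaluation at the two constants.} Put $\theta_0=\frac{x}{2(x+1)}$ with $x>0$. Then $1-2\theta_0=\frac1{x+1}$ and $\frac{1-\theta_0}{1-2\theta_0}=\frac{x+2}{2}$. Multiplying out,
\[
2(x+1)\Phi_q(\theta_0)=(x+2)\log(x+2)-(x+1)\log(x+1)-2\log 2-2\log q .
\]
Using $\log(1+u)\le u$, we get
\[
(x+2)\log(x+2)-(x+1)\log(x+1)=\log(x+1)+(x+2)\log\Bigl(1+\tfrac1{x+1}\Bigr)\le \log(x+1)+1+\tfrac1{x+1}.
\]

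\textbf{Step 3: the two cases.}

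For the even constant, take $x=q^2$. The right-hand side of Step 2 is then at most
\[
\log\Bigl(1+\tfrac1{q^2}\Bigr)+1+\tfrac1{q^2+1}-2\log2 .
\]
For $q\ge3$ this is at most $\tfrac19+1+\tfrac1{10}-1.386<0$. For $q=2$ the crude bound is insufficient, so I evaluate directly:
\[
6\log6-5\log5-4\log2\approx-0.068<0.
\]

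For the odd constant, take $x=q^2+q$. The bound becomes
\[
\log\frac{q^2+q+1}{q^2}+1+\frac1{q^2+q+1}-2\log 2 .
\]
This is decreasing in $q$, and at $q=4$ it equals $\log\frac{21}{16}+1+\frac1{21}-2\log2\approx0.272+1.048-1.386<0$. This proves the claim for all $q\ge4$.

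\textbf{Main obstacle.} The only delicate points are the small cases where the elementary estimate $\log(1+u)\le u$ is too lossy. These are $q=2$ in the even case, and $q=4$ in the odd case if one uses a cruder bound. I would handle them by exact numerical evaluation as above, or, if more uniformity is wanted, by the sharper estimate $\log(1+u)\le u-\frac{u^2}{2}+\frac{u^3}{3}$. The fact that the odd statement is asserted only for $q\ge4$ suggests that the sign genuinely fails, or is at least not controlled this way, for $q=2,3$.
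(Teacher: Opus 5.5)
Your proof is correct and follows essentially the same route as the paper: the same formula $\Phi_q'(\theta)=\log\frac{2q^2(1-2\theta)}{1-\theta}$, the same closed form for $\Phi_q$ at $\theta=\frac{R}{2(R+1)}$, and negativity at $R=q^2$ and $R=q^2+q$ via $\log(1+u)\le u$ plus a base-case evaluation. The differences are minor. You determine the full sign pattern of $\Phi_q$ on $[0,1/2)$, using $\Phi_q\to0$ as $\theta\to1/2^-$, whereas the paper checks that both constants lie below $\frac{2q^2-1}{4q^2-1}$. You also bound the numerator directly instead of proving the paper's auxiliary functions monotone, which is why you need the separate $q=2$ computation; that value is about $-0.069$ rather than $-0.068$, but only the sign matters.
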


    \begin{proof}
    We first record the monotonicity range of $\Phi_q$. Differentiating Eq.~\eqref{eq:ning_rate_function} gives
    \[
    \Phi_q'(\theta)=\log\left(\frac{2q^2(1-2\theta)}{1-\theta}\right).
    \]
    Hence $\Phi_q$ is strictly increasing on
    \[
    \left[0,\frac{2q^2-1}{4q^2-1}\right).
    \]
    The two constants from Theorem~\ref{thm:asymptotic_kn_bound} lie in this interval, since
    \[
    \frac{2q^2-1}{4q^2-1}-\frac{q^2}{2(q^2+1)}
    =\frac{3q^2-2}{2(2q-1)(2q+1)(q^2+1)}>0,
    \]
    and
    \[
    \frac{2q^2-1}{4q^2-1}-\frac{q^2+q}{2(q^2+q+1)}
    =\frac{(q-1)(3q+2)}{2(2q-1)(2q+1)(q^2+q+1)}>0.
    \]

    For $R>0$, denote
    \[
    \theta_R:=\frac{R}{2(R+1)}.
    \]
    Since $1-2\theta_R=1/(R+1)$ and $1-\theta_R=(R+2)/(2(R+1))$, direct substitution gives
    \begin{equation}\label{eq:Phi_theta_R}
    \Phi_q(\theta_R)
    =\frac{(R+2)\log(R+2)-(R+1)\log(R+1)-2\log(2q)}{2(R+1)}.
    \end{equation}
Next, we evaluate Eq.~\eqref{eq:Phi_theta_R} at $R=q^2$ and $R=q^2+q$. We will prove that
\[
\Phi_q(\theta_{q^2})<0\quad (q\ge2),\qquad
\Phi_q(\theta_{q^2+q})<0\quad (q\ge4).
\]
Since $\Phi_q$ is strictly increasing on an interval containing both relevant points, these inequalities imply
\[
\frac{q^2}{2(q^2+1)}<\theta_{\mathrm N}(q)\quad (q\ge2),\qquad
\frac{q^2+q}{2(q^2+q+1)}<\theta_{\mathrm N}(q)\quad (q\ge4).
\]

    First take $R=q^2$. The numerator of Eq.~\eqref{eq:Phi_theta_R} is
\[
N_q=(q^2+2)\log(q^2+2)-(q^2+1)\log(q^2+1)-\log(4q^2).
\]
It is enough to show that $N_q<0$. Equivalently, set $x=q^2$ and define
\[
H(x)=\log(4x)-(x+2)\log(x+2)+(x+1)\log(x+1).
\]
Then $H(x)=-N_q$. Moreover,
\[
H'(x)=\frac{1}{x}-\log\left(1+\frac{1}{x+1}\right)>0
\]
for $x>0$. Since
\[
H(4)=\log\left(\frac{16\cdot5^5}{6^6}\right)>0,
\]
we have $H(q^2)>0$ for all $q\ge2$. Hence $N_q<0$, and therefore
\[
\Phi_q\left(\theta_{q^2}\right)<0 \quad \text{for } q\ge2.
\]

    Next take $R=q^2+q$. The numerator of Eq.~\eqref{eq:Phi_theta_R} is
\[
N_q'=(q^2+q+2)\log(q^2+q+2)
-(q^2+q+1)\log(q^2+q+1)-\log(4q^2).
\]
It is enough to show that $N_q'<0$. Equivalently, define
\[
K(q)=\log(4q^2)-(q^2+q+2)\log(q^2+q+2)
+(q^2+q+1)\log(q^2+q+1).
\]
Then $K(q)=-N_q'$. We have
\[
K(4)=\log\left(\frac{64\cdot21^{21}}{22^{22}}\right)>0.
\]
Moreover,
\[
\begin{aligned}
K'(q)
&=\frac{2}{q}-(2q+1)\log\left(1+\frac{1}{q^2+q+1}\right)\\
&>\frac{2}{q}-\frac{2q+1}{q^2+q+1}
=\frac{q+2}{q(q^2+q+1)}>0.
\end{aligned}
\]
Thus $K(q)>0$ for all $q\ge4$. Hence $N_q'<0$, and therefore
\[
\Phi_q\left(\theta_{q^2+q}\right)<0 \quad \text{for } q\ge4.
\]
    \end{proof}
Proposition~\ref{prop:strict_improvement_ning} shows that Theorem~\ref{thm:asymptotic_kn_bound} gives a strictly stronger asymptotic upper bound on the ratio $k/n$. A comparison for small $q\in\{4,5,6,7,8,9\}$ has been listed in Table~\ref{tab:compare}.
    
    

\section{Conclusion}

In this paper, we resolved the Scott-type nonexistence-bound conjecture posed by Ning et al.\ \cite{NingShiLuoZhang2025} for $k$-uniform states with arbitrary defect. Equivalently, the result gives explicit length bounds for one-dimensional pure quantum error-correcting codes near the quantum Singleton regime. The proof is based on a truncated MacWilliams linear-programming system for the quantum weight enumerator and the unitary enumerator, together with an explicit infeasibility certificate. This yields a closed-form Scott-type upper bound on the length $n$ in terms of $(q,l)$ for all $q\ge2$ and $l\in\mathbb Z_{\ge0}$.

As a direct application, the explicit bound gives new asymptotic upper bounds on the ratio $k/n$ for $k$-uniform states with fixed local dimension $q$. Compared with the work in~\cite{NingShiLuoZhang2025}, our results are explicit and stronger for every $q\ge4$. As shown in Table~\ref{tab:compare}, the present bound does not improve the shadow-inequality bounds for $q=2,3$~\cite{Rains1998Shadow,ShiNingZhaoZhang2025}. It would be interesting to explore whether shadow-inequality methods can yield comparable Scott-type bounds for general local dimension $q$.


\bibliography{refs}

@article{Scott2004,
  author  = {Scott, A. J.},
  title   = {Multipartite entanglement, quantum-error-correcting codes, and entangling power of quantum evolutions},
  journal = {Physical Review A},
  volume  = {69},
  number  = {5},
  pages   = {052330},
  year    = {2004},
  doi     = {10.1103/PhysRevA.69.052330}
}

@article{ShorLaflamme1997,
  author  = {Shor, Peter W. and Laflamme, Raymond},
  title   = {Quantum Analog of the {M}ac{W}illiams Identities for Classical Coding Theory},
  journal = {Physical Review Letters},
  volume  = {78},
  number  = {8},
  pages   = {1600--1602},
  year    = {1997},
  doi     = {10.1103/PhysRevLett.78.1600}
}

@article{CalderbankRainsShorSloane1998,
  author  = {Calderbank, A. R. and Rains, E. M. and Shor, P. M. and Sloane, N. J. A.},
  title   = {Quantum error correction via codes over {GF}(4)},
  journal = {IEEE Transactions on Information Theory},
  volume  = {44},
  number  = {4},
  pages   = {1369--1387},
  year    = {1998},
  doi     = {10.1109/18.681315}
}

@article{Rains1998WeightEnum,
  author  = {Rains, Eric M.},
  title   = {Quantum weight enumerators},
  journal = {IEEE Transactions on Information Theory},
  volume  = {44},
  number  = {4},
  pages   = {1388--1394},
  year    = {1998},
  doi     = {10.1109/18.681316}
}

@article{HuberEltschkaSiewertGuhne2018,
  author  = {Huber, Felix and Eltschka, Christopher and Siewert, Jens and G{\"u}hne, Otfried},
  title   = {Bounds on absolutely maximally entangled states from shadow inequalities, and the quantum {M}ac{W}illiams identity},
  journal = {Journal of Physics A: Mathematical and Theoretical},
  volume  = {51},
  number  = {17},
  pages   = {175301},
  year    = {2018},
  doi     = {10.1088/1751-8121/aab145}
}

@article{AshikhminLitsyn1999,
  author  = {Ashikhmin, Alexei and Litsyn, Simon},
  title   = {Upper bounds on the size of quantum codes},
  journal = {IEEE Transactions on Information Theory},
  volume  = {45},
  number  = {4},
  pages   = {1206--1215},
  year    = {1999},
  doi     = {10.1109/18.761271}
}

@article{GrasslHuberWinter2022,
  author  = {Grassl, Markus and Huber, Felix and Winter, Andreas},
  title   = {Entropic proofs of {S}ingleton bounds for quantum error-correcting codes},
  journal = {IEEE Transactions on Information Theory},
  volume  = {68},
  number  = {6},
  pages   = {3942--3950},
  year    = {2022},
  doi     = {10.1109/TIT.2022.3149291}
}

@misc{NingShiLuoZhang2025,
  author = {Ning, Yu and Shi, Fei and Luo, Tao and Zhang, Xiande},
  title  = {Linear Programming Bounds on $k$-Uniform States},
  year   = {2025},
  note   = {arXiv:2503.02222 [quant-ph]}
}

@article{HelwigCuiRieraLatorreLo2012,
  author  = {Helwig, Wolfram and Cui, Wei and Riera, Arnau and Latorre, Jos{\'e} I. and Lo, Hoi-Kwong},
  title   = {Absolute maximal entanglement and quantum secret sharing},
  journal = {Physical Review A},
  volume  = {86},
  number  = {5},
  pages   = {052335},
  year    = {2012},
  doi     = {10.1103/PhysRevA.86.052335}
}

@article{ShiLiChenZhang2021Masking,
  author  = {Shi, Fei and Li, Mao-Sheng and Chen, Lin and Zhang, Xiande},
  title   = {{$k$}-uniform quantum information masking},
  journal = {Physical Review A},
  volume  = {104},
  number  = {3},
  pages   = {032601},
  year    = {2021},
  doi     = {10.1103/PhysRevA.104.032601}
}

@article{GoyenecheZyczkowski2014,
  author  = {Goyeneche, Dardo and {\.Z}yczkowski, Karol},
  title   = {Genuinely multipartite entangled states and orthogonal arrays},
  journal = {Physical Review A},
  volume  = {90},
  number  = {2},
  pages   = {022316},
  year    = {2014},
  doi     = {10.1103/PhysRevA.90.022316}
}

@article{GoyenecheAlsinaLatorreRieraZyczkowski2015,
  author  = {Goyeneche, Dardo and Alsina, Daniel and Latorre, Jos{\'e} I. and Riera, Arnau and {\.Z}yczkowski, Karol},
  title   = {Absolutely maximally entangled states, combinatorial designs, and multiunitary matrices},
  journal = {Physical Review A},
  volume  = {92},
  number  = {3},
  pages   = {032316},
  year    = {2015},
  doi     = {10.1103/PhysRevA.92.032316}
}

@article{GoyenecheRaissiDiMartinoZyczkowski2018,
  author  = {Goyeneche, Dardo and Raissi, Zahra and Di Martino, Sara and {\.Z}yczkowski, Karol},
  title   = {Entanglement and quantum combinatorial designs},
  journal = {Physical Review A},
  volume  = {97},
  number  = {6},
  pages   = {062326},
  year    = {2018},
  doi     = {10.1103/PhysRevA.97.062326}
}

@article{RaissiTeixidoGogolinAcin2020,
  author  = {Raissi, Zahra and Teixid{\'o}, Adam and Gogolin, Christian and Ac{\'i}n, Antonio},
  title   = {Constructions of {$k$}-uniform and absolutely maximally entangled states beyond maximum distance codes},
  journal = {Physical Review Research},
  volume  = {2},
  number  = {3},
  pages   = {033411},
  year    = {2020},
  doi     = {10.1103/PhysRevResearch.2.033411}
}

@article{HuberGuhneSiewert2017,
  author  = {Huber, Felix and G{\"u}hne, Otfried and Siewert, Jens},
  title   = {Absolutely maximally entangled states of seven qubits do not exist},
  journal = {Physical Review Letters},
  volume  = {118},
  number  = {20},
  pages   = {200502},
  year    = {2017},
  doi     = {10.1103/PhysRevLett.118.200502}
}

@article{LiWang2019,
  author  = {Li, Mao-Sheng and Wang, Yan-Ling},
  title   = {{$k$}-uniform quantum states arising from orthogonal arrays},
  journal = {Physical Review A},
  volume  = {99},
  number  = {4},
  pages   = {042332},
  year    = {2019},
  doi     = {10.1103/PhysRevA.99.042332}
}

@article{FengJinXingYuan2017,
  author  = {Feng, Keqin and Jin, Lingfei and Xing, Chaoping and Yuan, Chen},
  title   = {Multipartite entangled states, symmetric matrices, and error-correcting codes},
  journal = {IEEE Transactions on Information Theory},
  volume  = {63},
  number  = {9},
  pages   = {5618--5627},
  year    = {2017},
  doi     = {10.1109/TIT.2017.2700866}
}

@article{Rains1999ShadowEnum,
  author  = {Rains, Eric M.},
  title   = {Quantum shadow enumerators},
  journal = {IEEE Transactions on Information Theory},
  volume  = {45},
  number  = {7},
  pages   = {2361--2366},
  year    = {1999},
  doi     = {10.1109/18.796376}
}

@article{ShiNingZhaoZhang2025,
  author  = {Shi, Fei and Ning, Yu and Zhao, Qi and Zhang, Xiande},
  title   = {Bounds on {$k$}-uniform quantum states},
  journal = {IEEE Transactions on Information Theory},
  volume  = {71},
  number  = {1},
  pages   = {413--425},
  year    = {2025},
  doi     = {10.1109/TIT.2024.3481042}
}

@article{Rains1998Shadow,
  author  = {Rains, Eric M.},
  title   = {Shadow bounds for self-dual codes},
  journal = {IEEE Transactions on Information Theory},
  volume  = {44},
  number  = {1},
  pages   = {134--139},
  year    = {1998},
  doi     = {10.1109/18.651007}
}

@article{Han2006Nonexistence,
  author  = {Han, Sangwook and Lee, Jon-Lark Kim},
  title   = {Nonexistence of some extremal self-dual codes},
  journal = {Journal of the Korean Mathematical Society},
  volume  = {43},
  number  = {6},
  pages   = {1357--1369},
  year    = {2006}
}

@article{Han2009Nonexistence,
  author  = {Han, Sangwook and Kim, Jon-Lark},
  title   = {The nonexistence of near-extremal formally self-dual codes},
  journal = {Designs, Codes and Cryptography},
  volume  = {51},
  number  = {1},
  pages   = {69--77},
  year    = {2009},
  doi     = {10.1007/s10623-008-9247-8}
}

@misc{Huber:AMEtable,
  author = {Huber, Felix and Wyderka, Nicolai},
  title  = {Table of {AME} states},
  note   = {Online available at \url{https://tp.nt.uni-siegen.de/ame/ame.html}, accessed on 2026-05-13}
}

@book{NielsenChuang2010,
  author    = {Nielsen, Michael A. and Chuang, Isaac L.},
  title     = {Quantum Computation and Quantum Information},
  edition   = {10th Anniversary},
  publisher = {Cambridge University Press},
  address   = {Cambridge},
  year      = {2010}
}

@book{Watrous2018,
  author    = {Watrous, John},
  title     = {The Theory of Quantum Information},
  publisher = {Cambridge University Press},
  address   = {Cambridge},
  year      = {2018}
}
\appendix
\section{Appendix}\label{Appendix}
    \begin{proof}[Proof of Lemma~\ref{comeq}]
           (1) Observe that
                \[
                \begin{aligned}
                    \binom{s-i}{l+1-i}\binom{s-j}{i-j}&=\frac{(s-i)!}{(l+1-i)!(s-l-1)!}\frac{(s-j)!}{(i-j)!(s-i)!}\\
                    &=\frac{(s-j)!}{(s-l-1)!(l-j+1)!}\frac{(l-j+1)!}{(i-j)!(l+1-i)!}\\
                    &=\binom{s-j}{l-j+1}\binom{l-j+1}{l-i+1}.
                \end{aligned}
                \]
                Hence
                \[
                \begin{aligned}
                    \sum_{i=j}^l(-1)^{l-i}\binom{s-i}{l+1-i}\binom{s-j}{i-j}&=\binom{s-j}{l-j+1}\sum_{i=j}^l(-1)^{l-i}\binom{l-j+1}{l-i+1}\\
                    &\xlongequal{k=l-i+1}\binom{s-j}{l-j+1}\sum_{k=1}^{l-j+1}(-1)^{k-1}\binom{l-j+1}{k}\\
                    &=\binom{s-j}{l-j+1}.
                \end{aligned}
                \]
               (2) The same factorial rearrangement gives
                \[
                \binom{s-i}{l+2-i}\binom{s-j}{i-j}=\binom{s-j}{l-j+2}\binom{l-j+2}{l-i+2}.
                \]
                Hence
                \[
                \begin{aligned}
                    \sum_{i=j}^l(-1)^{l-i}\binom{s-i}{l+2-i}\binom{s-j}{i-j}&=\binom{s-j}{l-j+2}\sum_{i=j}^l(-1)^{l-i}\binom{l-j+2}{l-i+2}\\
                    &\xlongequal{k=l-i+2}\binom{s-j}{l-j+2}\sum_{k=2}^{l-j+2}(-1)^{k}\binom{l-j+2}{k}\\
                    &=\binom{s-j}{l-j+2}(-1+(l-j+2))\\
                    &=(l-j+1)\binom{s-j}{l-j+2}.
                \end{aligned}
                \]
            
        \end{proof}

        \begin{proof}[Proof of Claim~\ref{claim_odd}]
            We solve Eq.~\eqref{aim_odd} with the prescribed tail values
            \[
            \lambda_{l+1}=l,\quad \lambda_{l+2}=\frac{l+1}{n-(k+l+1)},\quad \lambda_{l+3}=\dots=\lambda_{2l}=0,
            \]
            and then verify that the resulting $\lambda_1,\dots,\lambda_l$ are nonnegative. Set
            \[
            \bm{\lambda}\trans M^{-1}:=\bm{\alpha}\trans=(\alpha_1,\dots,\alpha_{2l}).
            \]
            Hence
            \[
            \bm{\lambda}\trans M^{-1}\bm{\mu}=\bm{\alpha}\trans\bm{\mu}=q^{k+1}t_1(\alpha_1+q^{2l-1}\alpha_{2l})+q^{k+2}t_2(\alpha_2+q^{2l-3}\alpha_{2l-1})\cdots+q^{k+l}t_{l}(\alpha_{l}+q\alpha_{l+1}).
            \]
            Since the parameters $t_1,\dots,t_l$ are arbitrary, the condition $\bm{\lambda}\trans M^{-1}\bm{\mu}=0$ is equivalent to
            \begin{equation}\label{eqsystem_odd}
                \left\{
                \begin{aligned}
                    \alpha_{1}+q^{2l-1}\alpha_{2l} &= 0,\\
                    &\vdots\\
                    \alpha_{l-1}+q^3\alpha_{l+2} &= 0,\\
                    \alpha_{l}+q\alpha_{l+1} &= 0.
                \end{aligned}
                \right.
            \end{equation}
            To compute the $\alpha_i$, put $s=n-k$ and use the same block decomposition of $M^{-1}$:
            \[
            ((\alpha_1,\dots,\alpha_l),(\alpha_{l+1},\dots,\alpha_{2l}))=((\lambda_1,\dots,\lambda_l),(\lambda_{l+1},\cdots,\lambda_{2l}))\begin{pmatrix}
                M_1' & O\\
                M_2' & M_3'
            \end{pmatrix},
            \]
            Hence
            \begin{align}
                (\alpha_1,\dots,\alpha_l)&=(\lambda_1,\dots,\lambda_l)M_1'+(\lambda_{l+1},\cdots,\lambda_{2l})M_2',\label{eqa1_odd}\\
                (\alpha_{l+1},\dots,\alpha_{2l})&=(\lambda_{l+1},\cdots,\lambda_{2l})M_3'.\label{eqa2_odd}
            \end{align}
            Substitution of Eq.~\eqref{lambda_odd} and Eq.~\eqref{eqm3_even} into Eq.~\eqref{eqa2_odd} gives
            \begin{equation}\label{alpha2_odd}
                \alpha_{l+1}=-1,\quad \alpha_{l+2}=\frac{l+1}{s-(l+1)},\quad \alpha_{l+3}=\dots=\alpha_{2l}=0.
            \end{equation}
            Then Eq.~\eqref{eqsystem_odd} gives
            \begin{equation}\label{alpha1_odd}
                \alpha_1=\dots=\alpha_{l-2}=0,\quad \alpha_{l-1}=-\frac{l+1}{s-(l+1)}q^3,\quad\alpha_{l}=q.
            \end{equation}
            Set
            \[
            (\lambda_{l+1},\cdots,\lambda_{2l})M_2'=(\beta_1,\dots,\beta_l).
            \]
            Then
            \begin{equation}\label{beta_odd}
                \begin{aligned}
                    \beta_i&=\lambda_{l+1}(M_2')_{1,i}+\lambda_{l+2}(M_2')_{2,i}\\
                    &=l(-1)^{l+1-i}\binom{s-i}{l+1-i}+\frac{l+1}{s-(l+1)}(-1)^{l-i}\binom{s-i}{l+2-i}.
                \end{aligned}
            \end{equation}
            Rewrite Eq.~\eqref{eqa1_odd} as
            \begin{equation*}
                (\lambda_1,\dots,\lambda_l)M_1'=(\alpha_1,\dots,\alpha_l)-(\lambda_{l+1},\cdots,\lambda_{2l})M_2'=(\alpha_1-\beta_1,\cdots,\alpha_l-\beta_l).
            \end{equation*}
            Using Eq.~\eqref{eqm1_even}, this is equivalent to
            \begin{equation}\label{eqa1'_odd}
                (\lambda_1,\dots,\lambda_l)=(\alpha_1-\beta_1,\cdots,\alpha_l-\beta_l)M_1
            \end{equation}
            Combining Eqs.~\eqref{alpha1_odd}, \eqref{beta_odd}, and \eqref{eqa1'_odd}, we obtain
            \[
            \begin{aligned}
                \lambda_j&=\alpha_{l-1}M_{l-1,j}+\alpha_{l}M_{l,j}-\sum_{i=j}^{l}\beta_iM_{i,j}\\
                &=-\frac{l+1}{s-(l+1)}q^3\binom{s-j}{l-1-j}+q\binom{s-j}{l-j}+\sum_{i=j}^ll(-1)^{l-i}\binom{s-i}{l+1-i}\binom{s-j}{i-j}\\
                &\phantom{=}\,\,-\sum_{i=j}^l\frac{l+1}{s-(l+1)}(-1)^{l-i}\binom{s-i}{l+2-i}\binom{s-j}{i-j}\\
                &\xlongequal{\text{Lemma~\ref{comeq}}}-\frac{l+1}{s-(l+1)}q^3\binom{s-j}{l-j-1}+q\binom{s-j}{l-j}+l\binom{s-j}{l-j+1}-\frac{l+1}{s-(l+1)}(l-j+1)\binom{s-j}{l-j+2}\\
                &=-\frac{l+1}{s-(l+1)}q^3\binom{s-j}{l-j-1}+q\binom{s-j}{l-j}+(l-\frac{l+1}{s-(l+1)}(l-j+1)\frac{s-l-1}{l-j+2})\binom{s-j}{l-j+1}\\
                &=-\frac{l+1}{s-(l+1)}q^3\binom{s-j}{l-j-1}+q\binom{s-j}{l-j}+\frac{j-1}{l-j+2}\binom{s-j}{l-j+1}.
            \end{aligned}
            \]
            
            It remains only to check that these coefficients are nonnegative.

            Since $n=2k+2l+1>2(l+1)q^2+2(l+1)q+(2l-1)$, we have $k>(l+1)q^2+(l+1)q-1$. Denote
            \[
            m:=l-j\ (0\le m\le l-1), \quad a:=s-l=k+l+1>(l+1)q^2+1.
            \]
            The elementary ratios
            \[
            \begin{aligned}
                \binom{s-j}{l-j-1}&=\binom{s-j}{l-j}\frac{m}{a+1},\\
                \binom{s-j}{l-j+1}&=\binom{s-j}{l-j}\frac{a}{m+1},
            \end{aligned}
            \]
            give
            \[
            \lambda_j=\binom{s-j}{l-j}(-q^3\frac{l+1}{a-1}\frac{m}{a+1}+q+\frac{j-1}{m+2}\frac{a}{m+1}),
            \]
            where
            \[
            0\le \frac{q^3(l+1)m}{(a-1)(a+1)}\le\frac{q^3(l+1)(l-1)}{(l+1)^2q^4}=\frac{l-1}{q(l+1)}<1.
            \]
            It follows that
            \[
            \lambda_j\ge\binom{s-j}{l-j}(-1+q+0)\ge 0,
            \]
            and this proves the claim.
        \end{proof}

\end{document}